\tikzstyle{box}=[shape=rectangle, text height=1.5ex, text depth=0.25ex, yshift=0.5mm, fill=white, draw=black, minimum height=5mm, yshift=-0.5mm, minimum width=5mm, font={\small}]
\tikzstyle{Z dot}=[inner sep=0mm, minimum size=2mm, shape=circle, draw=black, fill={rgb,255: red,216; green,248; blue,216}, tikzit fill={rgb,255: red,216; green,248; blue,216}]
\tikzstyle{Z phase dot}=[minimum size=4.75mm, font={\footnotesize}, shape=rectangle, rounded corners=1.9mm, inner sep=0.1mm, outer sep=-2mm, scale=0.8, tikzit shape=circle, draw=black, fill={rgb,255: red,216; green,248; blue,216}, tikzit draw=blue, tikzit fill={rgb,255: red,216; green,248; blue,216}]
\tikzstyle{X dot}=[Z dot, shape=circle, draw=black, fill={rgb,255: red,232; green,165; blue,165}, tikzit fill={rgb,255: red,232; green,165; blue,165}]
\tikzstyle{X phase dot}=[Z phase dot, tikzit shape=circle, tikzit fill={rgb,255: red,232; green,165; blue,165}, fill={rgb,255: red,232; green,165; blue,165}, font={\footnotesize}, tikzit draw=blue]
\tikzstyle{XD dot}=[shape=XDdot, inner sep=2pt, draw=black]
\tikzstyle{XD phase dot}=[shape=XDdotphase, minimum size=4.75mm, font={\footnotesize}, inner sep=.1mm, outer sep=0mm, scale=0.8, tikzit shape=circle, rounded corners=1.9mm, draw=black]
\tikzstyle{zn}=[shape=zn, tikzit draw=black, draw=black, inner sep=2pt]
\tikzstyle{hadamard}=[fill={rgb,255: red,140; green,220; blue,248}, draw=black, shape=rectangle, inner sep=0.6mm, minimum height=1.5mm, minimum width=1.5mm, xslant=0.5]
\tikzstyle{hz}=[hadamard, fill={rgb,255: red,216; green,248; blue,216}, shape=rectangle, tikzit fill={rgb,255: red,216; green,248; blue,216}, minimum height=2 mm, minimum width=1.25 mm, tikzit draw=black]
\tikzstyle{hx}=[hadamard, fill={rgb,255: red,232; green,165; blue,165}, shape=rectangle, tikzit fill={rgb,255: red,232; green,165; blue,165}, minimum height=2 mm, minimum width=1.25 mm, tikzit draw=black]
\tikzstyle{vertex}=[inner sep=0mm, minimum size=1mm, shape=circle, draw=black, fill=black]
\tikzstyle{vertex set}=[inner sep=0mm, minimum size=1mm, shape=circle, draw=black, fill=white, font={\footnotesize\boldmath}]
\tikzstyle{meter}=[draw, fill=white, minimum width=2em, minimum height=1.5em, rectangle, path picture={\draw ([shift={(.1,.24)}]path picture bounding box.south west) to[bend left=50] ([shift={(-.1,.24)}]path picture bounding box.south east);\draw[-{Latex[scale=0.6]}] ([shift={(0,.1)}]path picture bounding box.south) -- ([shift={(.3,-.1)}]path picture bounding box.north);}, tikzit shape=rectangle]
\tikzstyle{white dot}=[Z dot]
\tikzstyle{gray dot}=[X dot]
\tikzstyle{white phase dot}=[Z phase dot]
\tikzstyle{gray phase dot}=[X phase dot]
\tikzstyle{red ket}=[fill={rgb,255: red,232; green,165; blue,165}, draw=black, shape=isosceles triangle, tikzit fill={rgb,255: red,232; green,165; blue,165}, tikzit draw=black, inner sep=0 mm, outer sep=2 mm]
\tikzstyle{tiny none}=[none, font={\tiny}]
\tikzstyle{green ket}=[fill={rgb,255: red,216; green,248; blue,216}, draw=black, shape=isosceles triangle, tikzit fill={rgb,255: red,216; green,248; blue,216}, tikzit draw=black, inner sep=0 mm, outer sep=2 mm]
\tikzstyle{filament}=[hadamard, fill=yellow, draw=none, minimum height=0.01mm]
\tikzstyle{unit circle}=[shape=circle, minimum size=42.5 mm, fill=none, draw={rgb,255: red,223; green,223; blue,223}, tikzit draw={rgb,255: red,223; green,223; blue,223}]
\tikzstyle{new style 0}=[shape=ellipse, minimum height=10 mm, minimum width=100 mm, fill=white, draw=black]
\tikzstyle{gate}=[box, minimum height=10mm, minimum width=10mm]
\tikzstyle{2 control}=[vertex set, draw=blue, inner sep=0.5pt]
\tikzstyle{1 control}=[2 control, draw=red]
\tikzstyle{0 control}=[2 control, draw=black]
\tikzstyle{small dot}=[vertex, minimum size=1 mm, draw=black, tikzit draw=black, tikzit fill=black, tikzit shape=circle]
\tikzstyle{tallbox}=[box, minimum height=12mm]
\tikzstyle{targ}=[vertex set, minimum size=0.5mm, inner sep=-0.5mm, tikzit shape=circle, shape=circle, tikzit draw=black]
\tikzstyle{hadamardbox}=[hadamard, xslant=0]
\tikzstyle{directedarrow}=[draw={rgb,255: red,223; green,223; blue,223}, ->, tikzit draw={rgb,255: red,223; green,223; blue,223}, line width=1 pt]
\tikzstyle{simple}=[-]
\tikzstyle{hadamard edge}=[-, color={rgb,255: red,0; green,100; blue,248}, dashed, dash pattern=on 2pt off 0.7pt, tikzit draw={rgb,255: red,0; green,100; blue,248}]
\tikzstyle{brace edge}=[-, tikzit draw=blue, decorate, decoration={brace,amplitude=1mm,raise=-1mm}]
\tikzstyle{gray}=[-, draw={rgb,255: red,223; green,223; blue,223}, line width=1 pt]
\tikzstyle{arrow}=[<-, draw={rgb,255: red,128; green,128; blue,128}]
\tikzstyle{double-arrow}=[draw={rgb,255: red,128; green,128; blue,128}, <->]
\tikzstyle{dashed edge}=[-, dashed, dash pattern=on 2pt off 0.5pt, draw=black]
\tikzstyle{diredge}=[->]
\tikzstyle{double edge}=[-, double, shorten <=-1mm, shorten >=-1mm, double distance=2pt]
\tikzstyle{thin}=[-, line width=0.05mm]
\tikzstyle{thin gray}=[-, draw={rgb,255: red,223; green,223; blue,223}, line width=0.05mm]
\tikzstyle{less thin}=[-, line width=0.1mm]
\tikzstyle{dashed gray edge}=[-, dashed edge, draw={rgb,255: red,128; green,128; blue,128}]
\tikzstyle{light right directed arrow}=[->, directedarrow, draw={rgb,255: red,223; green,223; blue,223}, line width=0.2mm]
\tikzstyle{diredge0.3}=[->, line width=0.3 mm]
\tikzstyle{less thin gray}=[-, draw={rgb,255: red,223; green,223; blue,223}]
\tikzstyle{dashed thin purple}=[-, dashed, line width=0.1mm, draw={rgb,255: red,128; green,106; blue,219}]
\tikzstyle{hadamardedge}=[-, color={rgb,255: red,100; green,200; blue,248}, dashed, dash pattern=on 2pt off 0.7pt, tikzit draw={rgb,255: red,120; green,220; blue,248}]
\definecolor{Red}{rgb}{0.6,0.1,0.2}
\title{Qutrit metaplectic gates are a subset of Clifford+T}
\titlerunning{Qutrit metaplectic gates are a subset of Clifford+T}
\author{Andrew Glaudell}{Booz Allen Hamilton, United States \and Department of Mathematics, George Mason University, United States \and Quantum Science and Engineering Center, George Mason University, United States}{andrewglaudell@gmail.com}{https://orcid.org/0000-0001-9824-5804}{}
\author{Neil J. Ross}{Department of Mathematics and Statistics, Dalhousie University, Canada \and \url{https://www.mathstat.dal.ca/~neilr/}}{neil.jr.ross@dal.ca}{https://orcid.org/0000-0003-0941-4333}{NSERC Discovery Grant}
\author{John van de Wetering}{Radboud University Nijmegen, The Netherlands \and University of Oxford, United Kingdom \and \url{https://vdwetering.name}}{john@vdwetering.name}{https://orcid.org/0000-0002-5405-8959}{NWO Rubicon Personal Grant}
\author{Lia Yeh}{Department of Computer Science, University of Oxford, United Kingdom}{lia.yeh@cs.ox.ac.uk}{https://orcid.org/0000-0003-2704-4057}{Oxford -
Basil Reeve Graduate Scholarship at Oriel College with the Clarendon Fund}
\authorrunning{A.~Glaudell, N.J.~Ross, J.~van de Wetering, L.~Yeh}
\keywords{Quantum computation, qutrits, gate synthesis, metaplectic gate, Clifford+T}
\begin{document}
\maketitle
\begin{abstract}
    A popular universal gate set for quantum computing with qubits is Clifford+$T$, as this can be readily implemented on many fault-tolerant architectures. For qutrits, there is an equivalent $T$ gate, that, like its qubit analogue, makes Clifford+$T$ approximately universal, is injectable by a magic state, and supports magic state distillation. 
    However, it was claimed that a better gate set for qutrits might be Clifford+$R$, where $R=\text{diag}(1,1,-1)$ is the \emph{metaplectic} gate, as certain protocols and gates could more easily be implemented using the $R$ gate than the $T$ gate. 
    In this paper we show that when we have at least two qutrits, the qutrit Clifford+$R$ unitaries form a strict subset of the Clifford+$T$ unitaries, by finding a direct decomposition of $R \otimes \mathbb{I}$ as a Clifford+$T$ circuit and proving that the $T$ gate cannot be exactly synthesized in Clifford+$R$. This shows that in fact the $T$ gate is at least as powerful as the R gate, up to a constant factor. Moreover,
    we additionally show that it is impossible to find a single-qutrit Clifford+$T$ decomposition of the $R$ gate, making our result tight.
\end{abstract}

\section{Introduction}
Most theoretical work on quantum computing has focussed on qu\emph{bit}s, two-dimensional quantum systems. However, many proposed physical types of qubits are actually restricted subspaces of higher-dimensional systems, where the natural dimension can be much higher. The restriction to qubits is made for two reasons: the difficulty of precisely controlling quantum systems and the reliance on analogy to classical computers where two-valued bits reign supreme. However, as quantum control continues to improve, researchers have revisited this design choice. In some cases, higher-dimensional qu\emph{dit}s appear to be the superior option. 
For example, using the otherwise wasted subspace of a quantum system as a qudit drastically increases a device's information density compared to a qubit-based counterpart. Interest in qudit algorithms and physical implementations has risen recently, due to the potential advantages in runtime efficiency, resource requirements, computational space, and noise resilience in communication~\cite{WangY2020quditsreview}.

For qudits to make a good foundation for a quantum computer, we need methods to achieve fault-tolerance. In qubit-based protocols, one popular paradigm is to rely on the Clifford+$T$ gate set. This gate set consists of the efficiently simulable Clifford gates that can be implemented directly on many error correcting codes, and the $T$ gate that can be implemented by distilling and injecting magic states~\cite{BravyiS2005ftqc}. Analogous constructions have been developed for qudits of all dimensions, each one relying on a specific generalization of the Clifford+$T$ gate set~\cite{CampbellE2012tgatedistillation}.

In this paper we focus on the case of qu\emph{trit}s, three-dimensional quantum systems. Qutrits permit the qutrit Clifford+$T$ gate set, which can be implemented fault-tolerantly on qutrit error correcting codes, analogous to the qubit setting~\cite{CuiS2015universalmetaplectic}.
The Clifford+$T$ gate set is however not the only proposed universal fault-tolerant gate set for qutrits.
In a series of papers~\cite{CuiS2015universalmetaplectic, CuiS2015universalweakly, BocharovA2016topologicalcompilation, BocharovA2016ternaryarithmetics, BocharovA2016optimalitymetaplectic, BocharovA2017ternaryshor} and a patent~\cite{Microsoft2021metaplecticanyonpatent}, the non-Clifford qutrit gate of choice is the $R$ gate, also referred to as the FLIP gate, \emph{reflection} gate, or \emph{metaplectic} gate. It is defined as $R\coloneqq\text{diag}(1,1,-1)$.  This gate was defined in Ref.~\cite{AnwarH2012r2distillation}, where it was shown to admit a magic state distillation and injection protocol.  As a non-Clifford gate it achieves approximate universality when added to the Clifford gate set~\cite{GottesmanD1999ftqudit}, as explicitly proved in Ref.~\cite[Theorem 2]{CuiS2015universalmetaplectic}.
It can be implemented in a framework of certain weakly-integral non-abelian anyons via braiding and topological measurement~\cite{CuiS2015universalweakly,CuiS2015universalmetaplectic}.

While the definition of the $R$ gate looks very simple, containing only $1$'s, $0$'s and a~$-1$, it is in fact nowhere in the qutrit Clifford hierarchy~\cite{CuiS2017Diagonalhierarchy}. This is because for qutrits, Clifford gates are based on the third root of unity $\omega = e^{i2\pi/3}$.  Despite this fact, the $R$ gate can still be injected into a qutrit circuit using a repeat-until-success procedure~\cite{AnwarH2012r2distillation}.  
The $R$ state has a magic state distillation protocol, after which such a repeat-until-success injection protocol can be applied to realise the $R$ gate~\cite{AnwarH2012r2distillation} fault-tolerantly.
Another construction of $R$ is by a measurement-assisted repeat-until-success protocol requiring two ancillary qutrits to probabilistically realize it out of Clifford gates~\cite{CuiS2015universalweakly}.
The $R$ gate has been suggested to be ``more powerful in practice'' than the $T$ gate by Bocharov, Roetteler, and Svore~\cite{BocharovA2017ternaryshor}.  They computed the cost of approximating the third level of the Clifford hierarchy in the Clifford+$R$ (which they refer to as the \textit{metaplectic}) gate set, and claimed that constructing the $R$ gate in the Clifford+$T$ gate set requires multiple ancillae and repeat-until-success circuits.

In this paper we demonstrate that, contrary to these previous assertions, $T$ is strictly more powerful than $R$. We show that while no single-qutrit Clifford+$T$ circuit composes to an $R$ gate unitarily%
\footnote{Unless explicitly stated, we mean ``single-qutrit'' circuit to be ancilla-free.}%
, rather unexpectedly the $R$ gate is exactly constructible through a unitary two-qutrit Clifford+$T$ circuit with $T$-count 39, which we construct in Section~\ref{sec:constructrtwo}. This demonstrates that $R$ $\in$ Clifford+$T$. 
Additionally, we prove that the converse is not true, i.e.~that $T$ $\notin$ Clifford+$R$, and hence Clifford+$R$ $\subsetneq$ Clifford+$T$. This directly implies any Clifford+$T$ computation can be exactly implemented through Clifford+$T$ gates with constant overhead, whereas there exist Clifford+$T$ circuits whose implementation via Clifford+$R$ must strictly increase with the desired precision.

This result might seem to contradict the fact that $R$ does not belong anywhere in the Clifford hierarchy, while every Clifford gate and the $T$ gate belongs to the third level $C_3$. But recall that while $C_1$ and $C_2$ are closed under composition, this is no longer true for the higher levels of the Clifford hierarchy. In particular, it is not true that any circuit built out of Clifford+$T$ gates is a unitary that belongs to $C_3$.

The paper is structured as follows.
We cover all the basics on qutrit quantum computation and gate synthesis in Section~\ref{sec:preliminaries}.
Then in Section~\ref{sec:constructrtwo} we show how to build the $R$ gate as a two-qutrit unitary using only Clifford+$T$ gates and we prove that it is not possible to do this using just single-qutrit Clifford+$T$ gates. We finish by demonstrating that $T$ is \emph{not} an element of Clifford+$R$ so that Clifford+$R$ is in fact a strict subset of Clifford+$T$.
We end with some concluding remarks in Section~\ref{sec:conclusion}.

\section{Qutrit Clifford+\texorpdfstring{$T$}{T}}
\label{sec:preliminaries}

A qubit is a two-dimensional Hilbert space. Similarly, a qutrit is a three-dimensional Hilbert space. We will write $\ket{0}$, $\ket{1}$, and $\ket{2}$ for the standard computational basis states of a qutrit.
Any normalised qutrit state can then be written as
\begin{equation}
   \ket{\psi} = \alpha \ket{0} +  \beta \ket{1} + \gamma \ket{2} 
\end{equation}
where $\alpha,\beta,\gamma\in \mathbb{C}$ and $|\alpha|^2 + |\beta|^2 + |\gamma|^2 = 1$.

For a comprehensive overview of quantum computing based on qudits, we refer to the 2020 review by Wang, Hu, Sanders, and Kais~\cite{WangY2020quditsreview}.  A qudit quantum processor has been experimentally demonstrated on ion trap systems~\cite{RingbauerM2021quditions} and superconducting circuits~\cite{BlokM2021scrambling,YeB2018cphasephoton,YurtalanM2020Walsh-Hadamard}.

\subsection{Pauli gates and permutation gates}
\label{sec:Xgates}

Several concepts for qubits extend to qutrits, or more generally to qu\emph{dits}, which are $d$-dimensional quantum systems.  We are concerned with the qudit generalizations of Paulis and Cliffords.
\begin{definition}
For a $d$-dimensional qudit, the Pauli $X$ and $Z$ gates are defined as
\begin{equation}
    X\ket{k} = \ket{k+1} \qquad\qquad Z\ket{k} = \omega^k \ket{k}
\end{equation}
where $\omega:= e^{2\pi i/d}$ such that $\omega^d = 1$, and the addition $\ket{k+1}$ is taken modulo $d$.
We define the \emph{Pauli group} as the set of unitaries generated by compositions and tensor products of the $X$ and $Z$ gates. 
We write $\mathcal{P}_n^d$ for the Pauli group on $n$ qudits~\cite{GottesmanD1999ftqudit,HowardM2012quditTgate}.
\end{definition}
For qubits this $X$ gate is just the NOT gate while $Z=\text{diag}(1,-1)$. For the duration of this paper we will work solely with qutrits, so we take $\omega$ to always be equal to $e^{2\pi i/3}$.

For a qubit there is only one non-trivial permutation of the standard basis states, which is implemented by the $X$ gate.
For qutrits, this grows to five non-trivial permutations of the basis states. We call these $\tau$ gates and we specify them as $\tau_{L}$ where $L$ is a permutation of the elements $\{0,1,2\}$ written in cycle notation. For example, $\tau_{(02)}$ is the permutation which maps $\ket{0}\mapsto\ket{2}$, $\ket{1}\mapsto\ket{1}$, and $\ket{2}\mapsto\ket{0}$. The five non-trivial permutations are then $\tau_{(01)}$, $\tau_{(12)}$, $\tau_{(02)}$, $\tau_{(012)}$, and $\tau_{(021)}$ along with the trivial identity permutation $\mathbb{I}=\tau_{(0)(1)(2)}$. Compositions of these operators are given by $\tau_L\cdot \tau_M=\tau_{L\cdot M}$ with $L\cdot M$ the composition of permutations. Note that $\tau_{(012)}=X$ and $\tau_{(021)}=X^\dagger$.

\subsection{Exact synthesis and number rings}

One natural question to ask when given a set of gates is to determine which operations can be implemented as a circuit over those gates. This is called the \emph{exact synthesis} problem. One frequently useful notion in addressing exact synthesis is computing the matrix representations of the set of gates in the computational basis and characterizing the \emph{number ring} to which their entries belong. A number ring is a set of numbers which explicitly contains 0 and 1 and is closed under the operations of addition and multiplication. For example, the integers $\mathbb{Z}$ form a number ring. 

We can \emph{extend} number rings by considering what happens when we introduce new numbers to the number ring. When we extend the number ring $R$ by $\alpha$ we write $R[\alpha]$ for the ring of formal sums $\sum_j r_j \alpha^j$ where $r_j\in R$.
Generally, we extend by an $\alpha$ which is the root of some monic polynomial whose coefficients come from $R$. If that polynomial has degree $p$, then all powers of $\alpha$ which are greater than $p-1$ and appear in an element of $R[\alpha]$ can be reduced via that polynomial.
For example, the third root of unity $\omega$ solves the monic polynomial $1+\omega+\omega^2=0$
	over the integers so that we define
	\(
		\mathbb{Z}[\omega] = \{a+b\omega~|~a,b\in\mathbb{Z}\}.
	\)
	Any higher-order powers of $\omega$ which might appear in an element of $\mathbb{Z}[\omega]$ can be reduced through its polynomial as for example, $\omega^2=-1-\omega$ and $\omega^3=1$.

Another common way to modify number rings is to introduce new denominators by \emph{localizing} a ring. For a number ring $R$ we can take any multiplicatively-closed subset $\mu$ of $R$ which contains 1 but not 0 and introduce that set of numbers as denominators:
\[
	\mu^{-1}R = \left\{\frac{r}{m}~\middle|~r\in R~\text{and}~m\in \mu\right\}.
\]

\begin{definition}
	The ring of \emph{triadic fractions} is the number ring defined by localizing $\mathbb{Z}$ at the set $\mu=\{3^k~|~k\in\mathbb{N}\}$, which we denote as $\mathbb{T} := \mu^{-1}\mathbb{Z} = \{a/3^k~\mid~a\in \mathbb{Z}, k\in \mathbb{N}\}$.
\end{definition}

The use of number rings to help solve the exact synthesis problem
stems from the following statement, attributable to many authors in
the field but perhaps most notably Kliuchnikov, Maslov, and Mosca
\cite{KliuchnikovV2013qubitctrlt}:

\begin{lemma}
	\label{lemma:gatesetring}
	Let $\mathcal{G}=\{G_1,\cdots,G_k\}$ be a quantum gate set. For all $j\in\{1,\dots,k\}$, let each $G_j$ have the computational basis matrix representation $M_j$ up to a complex global phase such that $M_j$ is a matrix with entries in the number ring $R$. Up to a global phase, the matrix representation for any circuit over $\mathcal{G}$ only has entries in the number ring $R$.
\end{lemma}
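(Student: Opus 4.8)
The plan is to proceed by induction on the structure of a circuit, using the fact that any circuit over $\mathcal{G}$ is built from the generators $G_1,\dots,G_k$ (together with identity wires) by finitely many applications of two operations: sequential composition, which corresponds to matrix multiplication, and parallel composition, which corresponds to the Kronecker product. Concretely, one can write any circuit as a finite product of layers $L_m \cdots L_1$, where each layer $L_i$ is a Kronecker product of generators and identity matrices acting on the individual wires. Since a number ring $R$ is by definition closed under addition and multiplication and contains $0$ and $1$, I expect both operations to preserve the property of ``having entries in $R$ up to a global phase,'' and the conclusion will then follow by induction on the number of gates.

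First I would record the base cases. Each generator $G_j$ equals $e^{i\phi_j} M_j$ for some phase $e^{i\phi_j}$ with $M_j$ an $R$-matrix, so every generator already has the desired form. Moreover, an identity wire contributes the identity matrix $\mathbb{I}$, whose entries are $0$ and $1$, both of which lie in $R$ by the definition of a number ring; hence $\mathbb{I}$ is itself an $R$-matrix with trivial phase. This covers the atomic pieces from which every layer is assembled.

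Next I would verify the two inductive steps. For sequential composition, suppose $U = e^{i\alpha} A$ and $V = e^{i\beta} B$ with $A, B$ having entries in $R$; then $UV = e^{i(\alpha+\beta)} AB$, and the $(p,q)$ entry of $AB$ is $\sum_r A_{pr} B_{rq}$, a finite sum of products of elements of $R$, hence again in $R$ by closure. For parallel composition, the Kronecker product satisfies $(e^{i\alpha}A)\otimes(e^{i\beta}B) = e^{i(\alpha+\beta)}(A\otimes B)$, and each entry of $A \otimes B$ is a single product $A_{pq}B_{rs}\in R$. In both cases the two phases combine into one overall phase multiplying a matrix whose entries remain in $R$, so the required form is preserved.

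The only point requiring care, and the closest thing to an obstacle, is the bookkeeping of global phases across the entire circuit: I must check that the individual phase factors $e^{i\phi_j}$ accumulated through composing and tensoring never mix into the matrix entries but instead always factor out into a single overall scalar $e^{i\Phi}$. This is exactly what the two identities above guarantee, since at each step the phase factors cleanly out of both the ordinary product and the Kronecker product. Applying these two closure properties repeatedly along any finite decomposition of a circuit into layers of padded generators, the induction yields that the full circuit matrix equals $e^{i\Phi} N$ for some phase $e^{i\Phi}$ and some matrix $N$ with all entries in $R$, which is precisely the claim.
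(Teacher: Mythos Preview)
Your argument is correct: induction on circuit structure, with closure of $R$ under the finite sums of products arising from matrix multiplication and the single products arising from the Kronecker product, together with the trivial factoring-out of scalar phases, is exactly the standard proof.

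There is no proof in the paper to compare against. The paper states Lemma~\ref{lemma:gatesetring} without proof, attributing it to prior work (most notably Kliuchnikov, Maslov, and Mosca), and then simply invokes it. Your write-up supplies precisely the elementary argument one would expect behind that citation, so there is nothing to contrast; if anything, your version is more explicit than what the paper provides.
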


It is important to note that Lemma~\ref{lemma:gatesetring} only suffices to \emph{exclude} operations from being representable over a given gate set. To show that a circuit with entries in a particular number ring implies expressibility over a certain gate set is generally equivalent to providing a full solution to the exact synthesis problem.

\begin{example}
	Any qutrit Pauli operation in the computational basis has entries from the number ring $\mathbb{Z}[\omega]$. This follows directly from Lemma~\ref{lemma:gatesetring} and the fact that $\mathcal{P}_n^3$ is generated by~$X$ and $Z$.
\end{example}

One interesting aspect of $\mathbb{Z}[\omega]$ (and number rings which contain roots of unity in general) is that it contains elements which square to non-square integers. 
In particular, $(\omega-\omega^2)^2=(\omega^2-\omega)^2=-3$. 
Note the minus sign here, which is important as $\pm\sqrt{3}\not\in\mathbb{Z}[\omega]$. Due to the ubiquity of the Pauli group and the natural appearance of $\omega$, when working with circuits over qutrits it has become increasingly customary to use $\pm(\omega-\omega^2) = \pm i\sqrt{3}$ in place of $\sqrt{3}$ when possible. We make use of this replacement frequently (see, e.g., the Hadamard gate defined below).

\subsection{Clifford gates}

Another concept that translates to qutrits (or more general qudits) is that of Clifford unitaries.

\begin{definition}
    Let $U$ be a unitary acting on $n$ qudits. We say that $U$ is \emph{Clifford} when every Pauli is mapped to another Pauli under conjugation by $U$. I.e., for any $P\in \mathcal{P}_n^d$ we have $UPU^\dagger \in \mathcal{P}_n^d$.
\end{definition}

Note that the set of $n$-qudit Cliffords forms a group under composition. For qubits, this group is generated by the $S$, Hadamard, and CX gates. The same is true for qutrits, for the right generalisation of these gates.

\begin{definition}\label{def:Z-phase-gate}
    We write $Z(a,b)$ for the \emph{phase gate} that acts as $Z(a,b)\ket{0} = \ket{0}$, $Z(a,b)\ket{1} = \omega^a\ket{1}$ and $Z(a,b)\ket{2} = \omega^b\ket{2}$ where we take $a,b\in \mathbb{R}$.
\end{definition}
We define $Z(a,b)$ in this way, taking $a$ and $b$ to correspond to phases that are multiples of $\omega$, because $Z(a,b)$ will turn out to be Clifford iff $a$ and $b$ are integers. Note that the collection of all $Z(a,b)$ operators constitutes the group of diagonal single-qutrit unitaries modded out by a global phase. Composition of these operations is given by $Z(a,b)\cdot Z(c,d)=Z(a+b,c+d)$.

\begin{definition}
    The qutrit $S$ gate is $S:= Z(0,1)$. I.e., it multiplies the $\ket{2}$ state by $\omega$.
\end{definition}

For qubits, the Hadamard interchanges the $Z$ basis, $\ket{0}$ and $\ket{1}$, which are the eigenstates of the Pauli $Z$, and the $X$ basis, consisting of the states $\ket{\pm} := \frac{1}{\sqrt{2}}(\ket{0}\pm \ket{1})$. The same holds for the qutrit Hadamard.
In this case the $X$ basis consists of the following states (where we recall from above that $\frac{1}{\omega^2-\omega} = i/\sqrt{3}$):
\begin{align}
    \ket{+} &:= \frac{1}{\omega^2-\omega} (\ket{0}+\ket{1}+\ket{2}) \\
    \ket{\omega} &:= \frac{1}{\omega^2-\omega} (\ket{0}+\omega\ket{1}+\omega^2\ket{2}) \\
    \ket{\omega^2} &:= \frac{1}{\omega^2-\omega} (\ket{0}+\omega^2\ket{1}+\omega\ket{2})
\end{align}

\begin{definition}
    The \emph{qutrit Hadamard gate} $H$ is the gate that maps $\ket{0} \mapsto \ket{+}$, $\ket{1}\mapsto \ket{\omega}$ and $\ket{2} \mapsto \ket{\omega^2}$. As a matrix:
    \begin{equation}
	    H \ := \ \frac{1}{\omega^2-\omega}\begin{pmatrix}
    	1 & 1 & 1 \\
    	1 & \omega & \omega^2 \\
    	1 & \omega^2 & \omega
    \end{pmatrix}
    \end{equation}
\end{definition}

Note that, unlike the qubit Hadamard, the qutrit Hadamard is \emph{not} self-inverse. 
In fact, we have $H^2=-\tau_{(12)}$, so that $H^4 = \mathbb{I}$. In particular, $H^\dagger = H^3$. Furthermore, we note that just as the Clifford group in qubits generates certain global phases, the relation
\(
	(SH)^3=-\omega
\)
implies that global phases of $\pm1$, $\pm\omega$, and $\pm\omega^2$ naturally appear in the qutrit Clifford group.
The Pauli and $S$ gates we defined all have matrix representations with entries over $\mathbb{Z}[\omega]$. We see that $H$ naturally introduces denominators into our matrices, and so we should localize $\mathbb{Z}[\omega]$ to ensure we can characterize circuits which contain $H$. Since
\[
	\frac{\omega^k}{\omega^2-\omega}=\frac{\omega^k(\omega-\omega^2)}{3}
\]
we can introduce the appropriate denominators by localizing at $\mu=\{3^k~|~k\in\mathbb{N}\}$ to get the number ring $\mu^{-1}\mathbb{Z}[\omega]$. Note that this is equivalent to the number ring $\mathbb{T}[\omega]$ which consists of elements $a+b\omega$ where $a,b\in \mathbb{T}$ are triadic fractions.

In Definition~\ref{def:Z-phase-gate} we defined the Z phase gate. Similarly, we can define the X phase gates, that give a phase to the X basis states.
\begin{definition}\label{def:X-phase-gate}
    We define the \emph{X phase gates} to be $X(a,b) := HZ(a,b)H^\dagger$ where $a,b \in \mathbb{R}$.
\end{definition}
We already saw examples of such X phase gates: $X = X(2,1)$ and $X^\dagger = X(1,2)$.

Any single-qutrit Clifford can be represented (up to global phase) as a composition of Clifford $Z$ and $X$ phase gates.
In particular, we can represent the qutrit Hadamard as follows~\cite{GongX2017equivalence}:
\begin{align}
\label{eq:EUrule}
H &= -Z(2,2)X(2,2)Z(2,2) = -X(2,2)Z(2,2)X(2,2) \\
H^\dagger &= -Z(1,1)X(1,1)Z(1,1) = -X(1,1)Z(1,1)X(1,1) \label{eq:EUrule2}
\end{align}

The final Clifford gate we need is the qutrit CX.

\begin{definition}
    The qutrit CX gate is the two-qutrit gate defined by $\text{CX}\ket{i,j} = \ket{i,i+j}$ where the addition is taken modulo $3$.
\end{definition}

\begin{proposition}
	\label{prop:clifford}
    Let $U$ be a qutrit Clifford unitary. Then up to global phase $U$ can be written as a composition of the $S$, $H$ and CX gates~\cite{GottesmanD1999ftqudit}.
\end{proposition}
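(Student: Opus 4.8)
The plan is to exploit the conjugation action of a Clifford on the Pauli group, which linearizes the problem. For a Clifford $U$ on $n$ qutrits and any $P \in \mathcal{P}_n^3$, the element $UPU^\dagger$ is again Pauli by definition, so $U$ induces an automorphism of $\mathcal{P}_n^3/\text{phase} \cong \mathbb{Z}_3^{2n}$. Since conjugation preserves commutators, and commutators in $\mathcal{P}_n^3$ encode the symplectic form on $\mathbb{Z}_3^{2n}$, this automorphism is a linear symplectic map $\Phi(U) \in \mathrm{Sp}(2n,\mathbb{Z}_3)$. First I would record the resulting exact sequence
\[
1 \longrightarrow \mathcal{P}_n^3/\text{phase} \longrightarrow \mathcal{C}_n/\text{phase} \xrightarrow{\;\Phi\;} \mathrm{Sp}(2n,\mathbb{Z}_3) \longrightarrow 1 ,
\]
where the key structural input is that $\ker\Phi = \mathcal{P}_n^3/\text{phase}$: a Clifford fixing every Pauli up to phase defines a character of $\mathbb{Z}_3^{2n}$, which equals the symplectic pairing with a unique vector, i.e.\ is realized by some Pauli $Q$; then $Q^\dagger U$ commutes with all Paulis and so is scalar by irreducibility. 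Given this sequence, to prove the proposition it suffices to verify two things: (a) $\Phi$ maps $\langle S,H,\text{CX}\rangle$ onto all of $\mathrm{Sp}(2n,\mathbb{Z}_3)$, and (b) $\langle S,H,\text{CX}\rangle$ contains the entire Pauli group up to phase. Indeed, for arbitrary Clifford $U$, choose a word $W$ in $S,H,\text{CX}$ with $\Phi(W)=\Phi(U)$ by (a); then $UW^\dagger\in\ker\Phi$ is a Pauli, hence lies in the subgroup by (b), so $U\in\langle S,H,\text{CX}\rangle$ up to phase.

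For task (b) the crucial and perhaps counterintuitive point is that $S$ and $H$ alone already generate the single-qutrit Paulis, so no extra $X,Z$ generators are needed. The mechanism is the relation $H^2=-\tau_{(12)}$ stated above: conjugating $S=Z(0,1)=\operatorname{diag}(1,1,\omega)$ by $\tau_{(12)}$ yields $\operatorname{diag}(1,\omega,1)=Z(1,0)$, and since $Z(0,1)$ and $Z(1,0)$ together generate every $Z(a,b)$ with $a,b\in\mathbb{Z}_3$, we recover $Z=Z(1,2)$ as a product of these. Applying the Hadamard conjugation of Definition~\ref{def:X-phase-gate}, namely $X(a,b)=HZ(a,b)H^\dagger$ with $H^\dagger=H^3$, then produces $X=X(2,1)$ from the diagonal Cliffords already obtained. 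This gives both single-qutrit Paulis on any chosen tensor factor, and since $\mathcal{P}_n^3/\text{phase}$ is generated by the single-qutrit $X_i$ and $Z_i$, task (b) follows without invoking CX at all.

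For task (a) I would first handle the single-qutrit case: computing $\Phi(S)$ and $\Phi(H)$ shows that $S$ acts as a symplectic shear (transvection) while $H$ acts as the symplectic ``Fourier'' rotation exchanging the $X$ and $Z$ axes, and these two elements generate $SL(2,\mathbb{Z}_3)=\mathrm{Sp}(2,\mathbb{Z}_3)$ by the standard generation of $SL_2$ by a shear and an order-four rotation. The multi-qutrit step then uses $\text{CX}$, whose symplectic image supplies the coupling between modes, together with the single-mode images embedded on each factor, to generate all of $\mathrm{Sp}(2n,\mathbb{Z}_3)$. I expect this multi-qutrit symplectic generation to be the main obstacle: it requires a standard but nontrivial structural theorem that $\mathrm{Sp}(2n,\mathbb{Z}_d)$ is generated by single-mode symplectic transformations on each mode together with the two-mode CX transvections, which I would either cite or reduce to a transvection-generation argument over $\mathbb{Z}_3$. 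Everything else reduces to the explicit matrix computations of $\Phi(S),\Phi(H),\Phi(\text{CX})$ and the bookkeeping of the exact sequence.
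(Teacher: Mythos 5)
Your proposal is correct in outline, but it is worth noting that the paper does not actually prove this proposition: it is stated with a citation to Gottesman~\cite{GottesmanD1999ftqudit}, so there is no in-paper proof to match against. What you have written is essentially a reconstruction of the standard argument underlying that citation, and it holds up under scrutiny. The kernel computation is sound, with one small point worth spelling out: the phases $c(P)$ in $UPU^\dagger = c(P)P$ lie in $\{1,\omega,\omega^2\}$ because every class in $\mathcal{P}_n^3/\text{phase}$ has a representative of order $3$ (e.g.\ $(XZ)^3 = \mathbb{I}$), so $P\mapsto c(P)$ really is a $\mu_3$-valued character of $\mathbb{Z}_3^{2n}$ and is realized by the symplectic pairing with a unique vector; the Schur argument then closes the kernel claim. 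Your task~(b) is a nice observation and checks out concretely: $H^2 = -\tau_{(12)}$ conjugates $S = Z(0,1)$ to $Z(1,0)$, these generate all $Z(a,b)$ with $a,b\in\mathbb{Z}_3$ including $Z = Z(1,2)$, and $X = X(2,1) = HZ(2,1)H^\dagger$ follows with $H^\dagger = H^3$, so the Paulis come for free from $S$ and $H$. The single-qutrit part of task~(a) is also right: $SXS^\dagger = XZ$ and $SZS^\dagger = Z$, so $\Phi(S)$ is a transvection, while $\Phi(H)$ is the order-four rotation exchanging the $X$ and $Z$ axes, and these generate $SL(2,\mathbb{Z}_3) = \mathrm{Sp}(2,\mathbb{Z}_3)$. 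The one genuine debt, which you correctly flag yourself, is the multi-mode statement that $\mathrm{Sp}(2n,\mathbb{Z}_3)$ is generated by the single-mode symplectic groups together with the CX transvections; this is standard (it follows, e.g., from transvection generation of symplectic groups over fields, or from the explicit qudit Clifford decomposition algorithms of Hostens--Dehaene--De Moor or Farinholt) but it is the heaviest step, and deferring it to a citation means your proof, like the paper's one-line attribution, ultimately rests on the literature for its hardest ingredient --- which is a fair trade, and your write-up makes explicit exactly where that reliance sits.
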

From this it easily follows that the $Z(a,b)$ and $X(a,b)$ gates are Clifford if and only if~$a$ and~$b$ are integers.

\begin{corollary}
	Let $U$ be a qutrit Clifford unitary. Then up to a global phase $U$ has a matrix representation in the computational basis with entries in the number ring $\mathbb{T}[\omega]$. 
\end{corollary}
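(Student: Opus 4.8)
The plan is to combine the structural result of Proposition~\ref{prop:clifford} with the ring-membership criterion of Lemma~\ref{lemma:gatesetring}. By Proposition~\ref{prop:clifford}, any qutrit Clifford $U$ can, up to a global phase, be written as a circuit over the generating set $\{S, H, \mathrm{CX}\}$. By Lemma~\ref{lemma:gatesetring}, it therefore suffices to verify that each of these three generators has a computational-basis matrix representation whose entries lie in $\mathbb{T}[\omega]$, up to a global phase; the lemma then guarantees that any composition of them, and hence $U$, has entries in $\mathbb{T}[\omega]$ as well.

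First I would handle the two ``easy'' generators. The gate $S = Z(0,1) = \mathrm{diag}(1,1,\omega)$ has entries in $\mathbb{Z}[\omega]$, and $\mathrm{CX}$ is a permutation matrix with entries in $\{0,1\} \subseteq \mathbb{Z}$. Since $\mathbb{Z} \subseteq \mathbb{Z}[\omega] \subseteq \mathbb{T}[\omega]$, both gates already have entries in $\mathbb{T}[\omega]$, with no global-phase adjustment required.

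The only entry that demands any care is the Hadamard, whose matrix is $\frac{1}{\omega^2-\omega}$ times a matrix of powers of $\omega$. Each entry is of the form $\frac{\omega^k}{\omega^2-\omega}$, and using the identity already recorded in the excerpt,
\[
\frac{\omega^k}{\omega^2-\omega} = \frac{\omega^k(\omega-\omega^2)}{3} \in \tfrac{1}{3}\mathbb{Z}[\omega] \subseteq \mathbb{T}[\omega],
\]
so every entry of $H$ lies in $\mathbb{T}[\omega]$. Applying Lemma~\ref{lemma:gatesetring} with $R = \mathbb{T}[\omega]$ then completes the argument. I do not expect a genuine obstacle: the statement is a direct corollary of the two preceding results, and the only bookkeeping is to track that the sole source of denominators, namely the factor $\frac{1}{\omega^2-\omega}$ appearing in $H$, contributes only a power of $3$ to the denominator after rationalisation, which is exactly what the localization defining $\mathbb{T}$ permits.
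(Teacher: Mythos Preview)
Your argument is correct and follows exactly the paper's own proof, which simply cites Proposition~\ref{prop:clifford}, the definitions of $S$, $H$, and $\mathrm{CX}$, and Lemma~\ref{lemma:gatesetring}; you have merely spelled out the verification of the generators' entries in $\mathbb{T}[\omega]$ in more detail than the paper does.
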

\begin{proof}
	This follows from Proposition~\ref{prop:clifford}, the definitions of $S$, $H$, and CX, and Lemma~\ref{lemma:gatesetring}.
\end{proof}

\subsection{T gates and qutrit controlled gates}

Clifford unitaries don't suffice for universal computation, so let's introduce the $T$ gate. For this we will need the ninth-root of unity.
Throughout the remainder of the paper, we define $\zeta = e^{2\pi i/9}$.
\begin{definition}
    The qutrit $T$ gate is the Z phase gate defined as $T:=Z(1/3,-1/3) = \text{diag}(1,\zeta,\zeta^8)$~\cite{PrakashS2018normalform,CampbellE2012tgatedistillation,HowardM2012quditTgate}.
\end{definition}
Like the qubit $T$ gate, the qutrit $T$ gate belongs to the third level of the Clifford hierarchy, can be injected into a circuit using magic states, and its magic states can be distilled by magic state distillation. This means that we can fault-tolerantly implement this qutrit $T$ gate on many types of quantum error correcting codes.
Also as for qubits, the qutrit Clifford+$T$ gate set is approximately universal, meaning that we can approximate any qutrit unitary using just Clifford gates and the $T$ gate~\cite[Theorem 1]{CuiS2015universalmetaplectic}. 

The $T$ gate introduces the phase $\zeta$ into matrix representations of circuits and thus we should consider extending the previously-defined $\mathbb{T}[\omega]$ by $\zeta$. Note that $\zeta$ is a ninth root of unity which solves the cubic polynomial
\begin{equation}
	\zeta^3-\omega=0\label{eq:relation2}
\end{equation}
over $\mathbb{T}[\omega]$. In fact, this polynomial has no solutions over $\mathbb{T}[\omega]$, implying that $\zeta\not\in\mathbb{T}[\omega]$ (see Appendix~\ref{app:rootofunity}). We thus define the number ring $\mathbb{T}[\zeta]$:
\begin{definition}
	The extension of $\mathbb{T}[\omega]$ by $\zeta$ is the number ring $\mathbb{T}[\omega][\zeta]\cong\mathbb{T}[\zeta]$ defined by
	\begin{equation*}
	\mathbb{T}[\omega][\zeta]\cong\mathbb{T}[\zeta] := \{a+b\zeta+c\zeta^2+d\zeta^3+e\zeta^4+f\zeta^5~|~a,b,c,d,e,f\in\mathbb{T}\}.
	\end{equation*}
\end{definition}
Any higher powers of $\zeta$ that might appear in an expression for an element of $\mathbb{T}[\zeta]$ can be reduced using for instance Eq.~\eqref{eq:relation2}.

\begin{lemma}
	Let $U$ be a qutrit Clifford+$T$ unitary. Then up to a global phase $U$ has a matrix representation in the computational basis with entries in the number ring $\mathbb{T}[\zeta]$.
\end{lemma}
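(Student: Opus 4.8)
The plan is to invoke Lemma~\ref{lemma:gatesetring} directly, so the entire argument reduces to checking that every generator of the Clifford+$T$ gate set has a matrix representation (up to global phase) with entries in $\mathbb{T}[\zeta]$. By Proposition~\ref{prop:clifford}, the Clifford group is generated up to global phase by $S$, $H$, and CX, so the Clifford+$T$ gate set is generated by $\{S,H,\text{CX},T\}$. It therefore suffices to verify the claim for these four gates individually and then let Lemma~\ref{lemma:gatesetring} close the set under composition and tensor product.

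First I would recall the preceding corollary, which already establishes that each Clifford generator $S$, $H$, and CX has, up to global phase, entries in $\mathbb{T}[\omega]$. The key observation I then need is the inclusion $\mathbb{T}[\omega]\subseteq\mathbb{T}[\zeta]$: since the defining relation~\eqref{eq:relation2} gives $\omega=\zeta^3$, every element $a+b\omega$ with $a,b\in\mathbb{T}$ can be rewritten as $a+b\zeta^3$, which lies in $\mathbb{T}[\zeta]$ by definition. Hence the three Clifford generators also have entries in $\mathbb{T}[\zeta]$.

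Next I would handle the $T$ gate. By definition $T=\text{diag}(1,\zeta,\zeta^8)$, and reducing $\zeta^8$ via the polynomial relation of $\zeta$ (or simply noting $\zeta^8=\zeta^{-1}=\omega^{-1}\zeta^2$, which lies in $\mathbb{T}[\zeta]$) shows that every entry of $T$ is already an element of $\mathbb{T}[\zeta]$; in fact no localization is even needed for this generator. With all four generators confirmed to have entries in the common number ring $R=\mathbb{T}[\zeta]$, Lemma~\ref{lemma:gatesetring} immediately yields that any circuit over $\{S,H,\text{CX},T\}$, and hence any Clifford+$T$ unitary $U$, has up to global phase a computational-basis matrix representation with entries in $\mathbb{T}[\zeta]$.

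I do not anticipate a genuine obstacle here, since this is an assembly of facts established earlier rather than a new computation. The only point requiring a moment's care is making the ring inclusion $\mathbb{T}[\omega]\subseteq\mathbb{T}[\zeta]$ explicit via $\omega=\zeta^3$, so that the Clifford entries and the $T$ entries are seen to live in one and the same number ring before Lemma~\ref{lemma:gatesetring} is applied; everything else is bookkeeping.
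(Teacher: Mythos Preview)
Your proposal is correct and takes essentially the same approach as the paper: the paper's proof is the single line ``By the definitions of $S$, $H$, $T$, and CX and Lemma~\ref{lemma:gatesetring},'' and your argument simply unpacks this by explicitly noting the inclusion $\mathbb{T}[\omega]\subseteq\mathbb{T}[\zeta]$ via $\omega=\zeta^3$ and checking the generators one by one.
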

\begin{proof}
	By the definitions of $S$, $H$, $T$, and CX  and Lemma~\ref{lemma:gatesetring}.
\end{proof}

Using $T$ gates, we can construct certain controlled unitaries.
When we have an $n$-qu\emph{bit} unitary $U$, we can speak of the controlled gate that implements $U$. This is the $(n+1)$-qubit gate that acts as the identity when the first qubit is in the $\ket{0}$ state, and implements $U$ on the last $n$ qubits if the first qubit is in the $\ket{1}$ state.

For qutrits there are however multiple notions of control.

\begin{definition}
Let $U$ be a qutrit unitary. Then the \emph{$\ket{2}$-controlled $U$} is the unitary $\ket{2}$-$U$ that acts as
\begin{equation*}
    \ket{0}\otimes \ket{\psi} \mapsto \ket{0}\otimes \ket{\psi} \qquad
    \ket{1}\otimes \ket{\psi} \mapsto \ket{1}\otimes \ket{\psi} \qquad
    \ket{2}\otimes \ket{\psi} \mapsto \ket{2}\otimes U\ket{\psi}
\end{equation*}
I.e., it implements $U$ on the last qutrits if and only if the first qutrit is in the $\ket{2}$ state.
\end{definition}
Note that by conjugating the first qutrit with $X$ or $X^\dagger$ gates we can make the gate also be controlled on the $\ket{1}$ or $\ket{0}$ state.

A different notion of qutrit control was introduced by Bocharov, Roetteler, and Svore~\cite{BocharovA2017ternaryshor}:
Given a qutrit unitary $U$ they define $\Lambda(U)\ket{c}\ket{t} = \ket{c} \otimes (U^c \ket{t})$.
I.e., apply the unitary $U$ a number of times equal to to the value of the control qutrit, so that if the control qutrit is $\ket{2}$ we apply $U^2$ to the target qutrits. Note that we can get this notion of control from the former one: just apply a $\ket{1}$-controlled $U$, followed by a $\ket{2}$-controlled $U^2$. 
The Clifford CX gate defined earlier is in this notation equal to $\Lambda(X)$.

Adding controls to a Clifford gate generally makes it non-Clifford. In the case of the CX gate, which is $\Lambda(X)$, it is still Clifford, but the $\ket{2}$-controlled $X$ is \emph{not}.
\begin{lemma}
The $\ket{2}$-controlled $X$ and $\ket{2}$-controlled $X^\dagger$ gates can be implemented unitarily using Clifford+$T$ gates without ancillae, with a $T$-count of 3.
\end{lemma}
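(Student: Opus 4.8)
The plan is to first \emph{diagonalise} the gate and then realise the resulting diagonal gate as a phase polynomial built from exactly three $T$ gates, with every other gate being Clifford and hence not contributing to the $T$-count. Recall that $H\ket{j} = \ket{\omega^j}$ and that these are eigenstates of $X$ with $X\ket{\omega^j} = \omega^{-j}\ket{\omega^j}$, so that $H^\dagger X H = Z^\dagger$, equivalently $X = H Z^\dagger H^\dagger$ and $X^\dagger = H Z H^\dagger$. Conjugating only the target qutrit by $H$ is Clifford, so it suffices to build the $\ket{2}$-controlled $Z$ and $\ket{2}$-controlled $Z^\dagger$ gates with $T$-count $3$; the two target Hadamards add no $T$ gates. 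These controlled gates are diagonal: $\ket{2}$-$Z$ multiplies $\ket{i,j}$ by $\omega^{j}$ when $i=2$ and fixes it otherwise, i.e. it applies the phase $\omega^{j[i=2]}$.

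Next I would synthesise $\ket{2}$-$Z$ as a product of three ``phase gadgets''. Writing $T\ket{k} = \zeta^{t(k)}\ket{k}$ with $t = (0,1,-1)$ (using $\zeta^8 = \zeta^{-1}$), each gadget conjugates a single $T$ on the target by Clifford CX- and $X$-gates so as to apply the phase $\zeta^{t(\alpha(i+1)+j)}$ to $\ket{i,j}$, with the argument of $t$ read modulo $3$, for $\alpha = 0,1,2$. Concretely, $\alpha=0$ is just a $T$ on the target; $\alpha=1$ routes the value $i+j+1$ onto the target using one CX and one $X$ before the $T$, then uncomputes; and $\alpha=2$ routes $2i+j+2$ using $\mathrm{CX}^2$ and $X^2$. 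Since CX and $X$ are Clifford, the total $T$-count is exactly $3$. For $\ket{2}$-$Z^\dagger$ one replaces each $T$ by $T^\dagger$.

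The correctness statement to verify is that the combined phase on $\ket{i,j}$ equals $\zeta^{t(j)+t(i+j+1)+t(2i+j+2)}$, which I would check by a short case analysis over the nine basis states to equal $1$ for $i\in\{0,1\}$ and $\omega^{j}$ for $i=2$, leaving no residual Clifford phase to correct. The key structural point is that the three linear forms sum to $3(i+1)+3j \equiv 0 \pmod 3$, which forces the exponent to be a multiple of $3$ at every basis state, so that only powers of $\omega = \zeta^3$ survive and no stray ninth root of unity is ever left over. The crux of the argument — and the step that requires the real work — is precisely this choice of three linear forms: they must be arranged so that the unwanted ninth-root phases cancel on the six basis states with $i\neq 2$ while reinforcing to $\omega^{j}$ on the $i=2$ row. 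The constant shifts $i \mapsto i+1$ inside the forms, which matter exactly because $t$ is not additive, are what recenters the surviving phase pattern onto the $\ket{2}$ control value; finding and verifying this cancellation is the main obstacle.
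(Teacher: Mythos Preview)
Your argument is correct. The diagonalisation $X = HZ^\dagger H^\dagger$ reduces the problem to the diagonal $\ket{2}$-controlled $Z$ (and its inverse), and the three phase gadgets with linear forms $j$, $i{+}j{+}1$, $2i{+}j{+}2$ indeed sum to $3t(j)$ when $i=2$ and to $t(0)+t(1)+t(2)=0$ otherwise, so the combined phase is exactly $\omega^{j[i=2]}$ with no Clifford correction needed. The only thing I would add is the one-line check you allude to but do not spell out: for $i\in\{0,1\}$ the three arguments $\{j,\,i{+}j{+}1,\,2i{+}j{+}2\}$ are a permutation of $\{0,1,2\}$, which is why the exponents cancel.

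The paper proceeds differently: it simply exhibits the explicit two-qutrit circuit of Bocharov, Roetteler and Svore (built from $P_9 = XTX^\dagger$ gates interleaved with CX) and verifies it by direct computation, then conjugates by a Clifford to hit the exact $\ket{2}$-controlled $X$. Your route is the Hadamard-conjugate, phase-polynomial explanation of essentially the same circuit, derived from first principles rather than cited. What your version buys is a transparent reason \emph{why} three $T$ gates suffice (the affine forms exhaust the three cosets of $j$ modulo the control, forcing cancellation off the $\ket{2}$ row); what the paper's version buys is brevity, since it defers correctness to a single matrix multiplication.
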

\begin{proof}
    The $\ket{2}$-controlled $X$ gate can be constructed by the following 3 $T$ gate circuit:
    \begin{equation}
        \label{eq:tcxp1}
        \tikzfig{tcxp1}
    \end{equation}
    Here $P_9 := X T X^\dagger$.
    The dashed box shows the construction by Bocharov, Roetteler, and Svore~\cite[Figure 6]{BocharovA2017ternaryshor} for a Clifford equivalent gate, that our construction is based on.
    Taking the adjoint of Eq.~\eqref{eq:tcxp1} gives the $\ket{2}$-controlled $X^\dagger$ gate.
\end{proof}

\begin{lemma}
    \label{lemma:tcd12}
    The $\ket{2}$-controlled versions of the $\tau_{(01)}$, $\tau_{(02)}$, and $\tau_{(12)}$ gates can be implemented unitarily using Clifford+$T$ gates without ancillae, with a $T$-count of 15.
\end{lemma}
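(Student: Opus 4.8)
The plan is to construct a single $\ket{2}$-controlled transposition and then obtain the other two at no extra $T$-cost. Two elementary identities drive everything. First, $\ket{2}$-controls compose: $\ket{2}\text{-}(UV)=(\ket{2}\text{-}U)(\ket{2}\text{-}V)$, because when the control is $\ket{2}$ both factors act and otherwise both act trivially. Second, controlling a Clifford conjugate is free up to Cliffords: for any Clifford $C$ one has $\ket{2}\text{-}(CVC^\dagger)=C_t\,(\ket{2}\text{-}V)\,C_t^\dagger$, where $C_t$ denotes $C$ applied (uncontrolled) to the target, since for control $\ket{2}$ this gives $CVC^\dagger$ and otherwise $CC^\dagger=\mathbb{I}$. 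The gates $C_t,C_t^\dagger$ are Clifford and contribute no $T$ gates. I would record these two facts first, as they turn the problem into one about a single controlled gate built from controlled phase gates.

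Using the free-conjugation identity, I would reduce the three cases to one. All three transpositions are conjugate by the Clifford permutations $X=\tau_{(012)}$ and $X^\dagger=\tau_{(021)}$: conjugating $\tau_{(12)}$ by $X$ relabels the swapped pair along the $3$-cycle to give $\tau_{(02)}$, and conjugating by $X^\dagger$ gives $\tau_{(01)}$. Hence $\ket{2}\text{-}\tau_{(02)}=X_t\,(\ket{2}\text{-}\tau_{(12)})\,X_t^\dagger$ and $\ket{2}\text{-}\tau_{(01)}=X_t^\dagger\,(\ket{2}\text{-}\tau_{(12)})\,X_t$, with the target $X^{\pm}$ being free Cliffords. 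All three therefore share a $T$-count, and it suffices to build $\ket{2}\text{-}\tau_{(12)}$ with $T$-count $15$.

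The crux is constructing $\ket{2}\text{-}\tau_{(12)}$ itself. The structural obstruction to reusing only the previous lemma is parity: the cheap non-Clifford primitive is the $\ket{2}\text{-}X^{\pm}$ of $T$-count $3$, which in the control-$\ket{2}$ sector realizes the $3$-cycle $X$ (an even permutation), and any product of such controlled cyclic shifts with controlled diagonal gates keeps a cyclic support pattern, never the transposition pattern. The Hadamard must therefore enter, and I would bring it in through $H^2=-\tau_{(12)}$ together with the single-qutrit Euler decomposition $H=-Z(2,2)X(2,2)Z(2,2)$ with $X(2,2)=HZ(2,2)H^\dagger$. Controlling this decomposition and pushing the free target-Hadamards through (turning each controlled $X$-phase gate into a controlled $Z$-phase gate), the whole gate reduces to a fixed number of copies of the cost-$3$ controlled-$X^{\pm}$ primitive interleaved with free Cliffords; I expect five such copies, for a total $T$-count of $15$.

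The main obstacle is phase bookkeeping together with pinning the count to exactly $15$. Every $\pm1,\pm\omega,\pm\omega^2$ global phase that the Clifford rewriting puts on the target becomes a relative phase once controlled, and I must check that these combine to give precisely $\ket{2}\text{-}\tau_{(12)}$ rather than a variant differing by a $\text{diag}(1,1,-1)$ factor on the control. This is delicate here because that stray factor is exactly the $R\otimes\mathbb{I}$ gate, which this lemma feeds into and so may not be assumed: the required $-1$ must emerge genuinely from the Clifford+$T$ assembly of the controlled phase gates rather than be supplied by an $R$. The cleanest route is probably to exhibit the explicit circuit, verify directly that it implements the $9\times9$ controlled permutation matrix of $\ket{2}\text{-}\tau_{(12)}$, and count its $T$ gates; the reduction above then delivers $\ket{2}\text{-}\tau_{(01)}$ and $\ket{2}\text{-}\tau_{(02)}$ for free.
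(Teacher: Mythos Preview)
Your reduction of the three transpositions to a single case via conjugation by the target Cliffords $X^{\pm}$ is correct and is exactly what the paper does (``follow from Clifford equivalence''). The gap is in the construction of $\ket{2}\text{-}\tau_{(12)}$ itself: you do not actually build it, and the route you sketch is unlikely to land at $T$-count~$15$.

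Concretely, your plan is to control the Euler factorisation $-H^2=\tau_{(12)}$ with $H^{(\dagger)}$ expressed via $Z(2,2)$ and $X(2,2)$, and you assert this will reduce to five copies of the cost-$3$ primitive $\ket{2}\text{-}X^{\pm}$. But the only controlled phase gates you get at cost~$3$ from that primitive, via target-Hadamard conjugation, are $\ket{2}\text{-}Z$ and $\ket{2}\text{-}Z^\dagger$ (i.e.\ $Z(1,2)$ and $Z(2,1)$), not $\ket{2}\text{-}Z(2,2)$ or $\ket{2}\text{-}Z(1,1)$. In fact the paper's later Lemma~\ref{lemma:tcspdagphase} and its corollary show that $\ket{2}\text{-}Z(1,1)$ already costs $8$ $T$ gates (and only up to a controlled $\zeta$-phase), and three such gates are precisely what drive the $T$-count~$24$ of Lemma~\ref{lemma:tcmhdag}. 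So the Euler route you outline is the one the paper uses for $\ket{2}\text{-}(-H^\dagger)$, not for this lemma, and it overshoots~$15$. Your ``parity'' heuristic is also misleading: once you allow arbitrary two-qutrit Cliffords between the $\ket{2}\text{-}X^{\pm}$ primitives (not just target Cliffords), the control-$\ket{2}$ sector is no longer constrained to products of Paulis.

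The paper's proof does what you yourself concede is ``the cleanest route'': it simply exhibits an explicit two-qutrit circuit (Eq.~\eqref{eq:tcd12}), adapted from a Clifford-equivalent construction of Bocharov, Roetteler, and Svore, and one verifies by direct computation that it has $T$-count~$15$ and implements $\ket{2}\text{-}\tau_{(12)}$. To complete your argument you would need to supply such a circuit; the Euler-decomposition reasoning does not substitute for it.
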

\begin{proof}
    The $\ket{2}$-controlled $\tau_{(12)}$ gate can be constructed as follows:
    \begin{equation}
        \label{eq:tcd12}
        \tikzfig{tcd12}
    \end{equation}
    The dashed box shows the Clifford equivalent gate given by Bocharov, Roetteler, and Svore~\cite{BocharovA2017ternaryshor,BocharovA2016ternaryarithmetics} upon which the construction is based.
    The $\ket{2}$-controlled $\tau_{(01)}$ and $\ket{2}$-controlled $\tau_{(02)}$ gates follow from Clifford equivalence.
\end{proof}

\section{Results}
\label{sec:constructrtwo}
Previous implementations of the $R$ gate require either distillation~\cite{AnwarH2012r2distillation} or probabilistic creation of the diag($1,1,-1$) state~\cite{CuiS2015universalweakly,BocharovA2017ternaryshor}; both approaches then necessitate injection by a repeat-until-success protocol.  
Here we present a new approach, which implements $R$ unitarily over the qutrit Clifford+$T$ gate set. 
As we will discuss later, it is actually impossible to exactly build the $R$ gate from only single-qutrit Clifford+$T$ gates.  However, we \emph{can} construct the two-qutrit $R \otimes \mathbb{I}$ unitarily using Clifford+$T$ gates.
We will do this\footnote{Implemented at \url{https://github.com/lia-approves/qudit-circuits/tree/main/qutrit_R_from_T}.} by showing how to construct certain $\ket{2}$-controlled gates and then using the following observation.

Our construction will be based on the following idea:
\begin{equation}\label{eq:R-circuit}
        \tikzfig{RtwootimesI}
    \end{equation}

Note that global phases in gates are usually not relevant, but that they become relevant when adding control wires to the gate. Here we have a \emph{controlled} global phase because the global phase of $-1$ is applied to the target if and only if the control is in the $\ket{2}$ state.  Therefore, this is an instance of phase kickback: The action of the $\ket{2}$-controlled $-\mathbb{I}$ gate is identical to applying $R \otimes \mathbb{I}$, i.e. the $R$ gate to the control qutrit and identity on the target.

The $\ket{2}$-controlled $-H^2 = \tau_{(12)}$ was constructed as a Clifford+$T$ circuit decomposition in Eq.~\eqref{eq:tcd12}. It hence remains to show how we can construct the $\ket{2}$-controlled $-H^\dagger$ gate in Clifford+$T$.
We can build this using the $\ket{2}$-controlled $S^\dagger$ gate.
\begin{lemma}
    \label{lemma:tcspdagphase}
    The $\ket{2}$-controlled $S^\dagger$ gate can be constructed unitarily without ancillae, up to a controlled global phase of $\zeta$, using only Clifford+$T$ gates, with $T$-count~8.
\end{lemma}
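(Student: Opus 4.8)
The plan is to realize the $\ket{2}$-controlled $S^\dagger$ gate as a two-qutrit \emph{diagonal} operator and to synthesize it by the phase-polynomial method. Writing the two-qutrit computational basis as $\ket{c}\ket{t}$, the target unitary is the diagonal gate $\ket{c}\ket{t}\mapsto \zeta^{g(c,t)}\ket{c}\ket{t}$, where (recalling $S^\dagger = Z(0,-1)=\mathrm{diag}(1,1,\omega^2)=\mathrm{diag}(1,1,\zeta^6)$) we have $g(2,2)=6$ and $g(c,t)=0$ otherwise. The freedom allowed by the statement, namely a controlled global phase $\zeta$, means we are instead free to realize the gate whose $c=2$ row of exponents is $(1,1,7)$ rather than $(0,0,6)$, with all other entries $0$.

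The synthesis strategy is to build $g$ as a sum of phases contributed by single-qutrit $T$-power gates placed on the control and target wires and conjugated by powers of $\mathrm{CX}=\Lambda(X)$. Since $\mathrm{CX}$ implements addition modulo $3$ on the target, conjugating a target phase gate $\mathbb{I}\otimes T^{k}$ by $\mathrm{CX}^{a}$ produces a diagonal gate whose $\zeta$-exponent on $\ket{c}\ket{t}$ is $k\,f_T\!\big((t+ac)\bmod 3\big)$, where $f_T=(0,1,-1)$ is the exponent vector of $T$. All such gadgets are diagonal and hence commute, so their exponents add and the whole problem reduces to solving a linear system over $\mathbb{Z}_9$ for the exponents; Clifford diagonal gates (integer $Z(a,b)$ gates and their $\mathrm{CX}$-conjugates) contribute only multiples of $3$ and so adjust the $\omega$-phase part of $g$ for free. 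The essential, genuinely $T$-dependent content is therefore $g\bmod 3$, and using $f_T(x)\equiv x\pmod 3$ one finds that the required pattern is exactly the indicator $g(c,t)\equiv [c=2]\pmod 3$, independent of $t$.

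The crux, and the main obstacle, is that $[c=2]$ is a \emph{nonlinear} function of the control over $\mathbb{Z}_3$, equal to $2c^2+c$, whereas a single $\mathrm{CX}$-conjugated $T$ gate contributes only a phase linear in $(c,t)$. Producing the quadratic term is precisely what cannot be done with single-qutrit phase gates alone: any single-qutrit Clifford$+T$ diagonal gate $\mathrm{diag}(\zeta^{\phi(0)},\zeta^{\phi(1)},\zeta^{\phi(2)})$ must satisfy $\phi(0)+\phi(1)+\phi(2)\equiv 0\pmod 3$ (equivalently, $Z(\alpha,\beta)$ is Clifford$+T$ only when $\alpha+\beta\in\mathbb{Z}$), an obstruction that the pure $c^2$-phase violates. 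This is exactly why the construction must be genuinely two-qutrit and why an uncancellable controlled phase of $\zeta$ remains at the end. Concretely I would feed the control into the target with $\mathrm{CX}$ and use the three linear forms $t$, $t+c$, $t-c$ together with the identity $t^2+(t+c)^2+(t-c)^2\equiv 2c^2\pmod 3$ to assemble the quadratic term, then fix the residual linear $c$-term and the $\omega$-phase part with Clifford gates; tallying the $T$-powers used (and checking that none can be absorbed into a Clifford, since $T^3$ is Clifford) should yield a circuit of $T$-count $8$.

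Finally I would verify the claim by direct matrix multiplication, checking that the composite acts as the identity on the $c\in\{0,1\}$ blocks and as $\zeta S^\dagger$ on the $c=2$ block, with all intermediate entries lying in $\mathbb{T}[\zeta]$ as guaranteed by Lemma~\ref{lemma:gatesetring}. The delicate bookkeeping is tracking the residual phase on the $c=2$ branch so as to certify that it is \emph{exactly} $\zeta$ and not some other power, since this is the quantity on which the downstream $R\otimes\mathbb{I}$ construction relies via phase kickback.
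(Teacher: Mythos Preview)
Your phase-polynomial strategy contains a genuine gap that prevents it from reaching the target gate at all. You correctly compute that the required $\zeta$-exponent satisfies $g(c,t)\equiv [c{=}2]\equiv 2c^2+c \pmod 3$, and you correctly note that a $T$ gate contributes $f_T(x)\equiv x\pmod 3$. But then every gadget in your toolkit---a $T$-power applied to an affine form $\ell(c,t)$ via $\mathrm{CX}$-conjugation---contributes a phase that is \emph{linear} in $(c,t)$ modulo $3$, and Clifford diagonals contribute $0$ modulo $3$. Sums of linear functions are linear, so no diagonal circuit assembled this way can realise $g\equiv 2c^2+c\pmod 3$. Your identity $t^2+(t+c)^2+(t-c)^2\equiv 2c^2$ is about \emph{squares} of linear forms; it would help if $T$ contributed $\equiv\ell^2\pmod 3$, but it contributes $\equiv\ell$. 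A gate with exponent $3\ell^2$ (i.e.\ phase $\omega^{\ell^2}$) is Clifford and hence vanishes modulo $3$, so the identity does not produce the quadratic term you need. In short, the ``crux'' you identify is exactly the obstruction that kills the method you then propose.

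The paper sidesteps this by abandoning the all-diagonal approach. Its circuit sandwiches a $T$ and a $T^\dagger$ on the target around a block built from uncontrolled $\tau_{(01)}$ permutations and the (non-diagonal) $\ket{2}$-controlled $X^\dagger$ gate of the previous lemma. When the control is $\ket{0}$ or $\ket{1}$ the controlled pieces act trivially, the $\tau_{(01)}$'s cancel, and $T$ meets $T^\dagger$; when the control is $\ket{2}$ the middle collapses to $\tau_{(01)}X^\dagger\tau_{(01)}=X$, so pushing $T^\dagger$ through this permutation reshuffles its phases and the product with $T$ is $\zeta S^\dagger$ up to a Clifford fix. The $T$-count of $8$ then comes from one $T$, one $T^\dagger$, and two controlled-$X^{\pm}$ blocks of cost $3$ each. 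The quadratic phase you isolated is therefore produced not by a phase polynomial but by routing a genuinely non-diagonal controlled permutation between two $T$ gates---a mechanism your framework does not contain.
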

\begin{proof}
    The correctness can be verified by direct computation of the following circuit.
    \begin{equation}
        \tikzfig{tcspdagphase}
    \end{equation}
Alternatively, it is easy to see that this circuit does nothing if the first qutrit is in the $\ket{0}$ or $\ket{1}$ state, as the $\tau_{01}$ gates cancel, so that the $T$ can cancel with the $T^\dagger$.
Otherwise, if the first qutrit is $\ket{2}$, then the middle permutations combine to $\tau_{(01)} X^\dagger \tau_{(01)} = \tau_{(012)} = X$. When the $T^\dagger$ is pushed through this, the phases get permuted, and when combined with the $T$ gives an $S$ gate up to global phase.
\end{proof}
\begin{corollary}
    The $\ket{2}$-controlled $Z(1,1) = \text{diag}(1,\omega,\omega)$ gate can be constructed unitarily without ancillae, up to a controlled global phase of $\zeta^{-2} = \zeta^7$, using only Clifford+$T$ gates, with $T$-count~8.
\end{corollary}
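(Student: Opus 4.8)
The plan is to reduce the corollary directly to Lemma~\ref{lemma:tcspdagphase} by observing that $Z(1,1)$ and $S^\dagger$ are conjugate, up to a global phase, by a single-qutrit Clifford permutation. Writing the two diagonal gates out explicitly, $S^\dagger = \mathrm{diag}(1,1,\omega^2)$ while $Z(1,1)=\mathrm{diag}(1,\omega,\omega)$. The transposition $\tau_{(02)}$ is Clifford (every qutrit permutation is, since $\tau_{(12)}=-H^2$ and the remaining transpositions are $X$-conjugates of it), and conjugating by it moves the single nontrivial phase of $S^\dagger$ from $\ket{2}$ to $\ket{0}$, giving $\tau_{(02)}\,S^\dagger\,\tau_{(02)} = \mathrm{diag}(\omega^2,1,1) = \omega^2\,Z(1,1)$. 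First I would record this identity as the algebraic core of the argument.

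Next I would lift the conjugation to the controlled setting using the general fact that conjugating a $\ket{2}$-controlled gate by an \emph{unconditional} Clifford on the target preserves its controlled structure: for any single-qutrit $A$ one has $(I\otimes A)\,(\ket{2}\text{-}U)\,(I\otimes A^\dagger)=\ket{2}\text{-}(AUA^\dagger)$, because on the $c\in\{0,1\}$ branches the $A$ and $A^\dagger$ simply cancel. Taking $A=\tau_{(02)}$ (an involution, so $A^\dagger=A$) and sandwiching the circuit of Lemma~\ref{lemma:tcspdagphase} between two copies of $I\otimes\tau_{(02)}$ turns the $\ket{2}$-controlled $S^\dagger$ into a $\ket{2}$-controlled $\tau_{(02)}S^\dagger\tau_{(02)}$. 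Since $\tau_{(02)}$ is Clifford, no new $T$ gates enter, so the $T$-count stays at $8$.

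The step requiring the most care -- and where I expect the real content to lie -- is the global-phase bookkeeping, precisely because a phase that is invisible for a bare gate becomes physical once it sits on the $c=2$ branch of a controlled gate. The identity $\tau_{(02)}S^\dagger\tau_{(02)}=\omega^2 Z(1,1)$ deposits a factor $\omega^2$ \emph{inside} the control, which must be multiplied by the leftover controlled phase $\zeta$ carried over from Lemma~\ref{lemma:tcspdagphase}. Using $\omega=\zeta^3$ from Eq.~\eqref{eq:relation2}, this product is $\zeta\cdot\omega^2=\zeta\cdot\zeta^6=\zeta^7=\zeta^{-2}$, which is exactly the controlled phase claimed. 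I would therefore exhibit the final circuit as the Lemma~\ref{lemma:tcspdagphase} circuit sandwiched between two $I\otimes\tau_{(02)}$ gates, verify its action branch-by-branch on the control value, and read off that it realises the $\ket{2}$-controlled $Z(1,1)$ up to a controlled global phase of $\zeta^{-2}=\zeta^7$ with $T$-count $8$.
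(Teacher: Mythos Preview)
Your proposal is correct and is essentially the paper's own argument: the paper exhibits a circuit obtained by sandwiching the Lemma~\ref{lemma:tcspdagphase} construction between $\tau_{(02)}$ gates on the target and remarks that correctness follows ``by commuting $S^\dagger$ and $\tau_{(02)}$.'' Your computation $\tau_{(02)}S^\dagger\tau_{(02)}=\omega^2 Z(1,1)$ together with the phase bookkeeping $\zeta\cdot\omega^2=\zeta^7$ is exactly the intended verification.
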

\begin{proof}
    Use the following circuit:
    \begin{equation}
        \label{eq:tczwwphase}
        \tikzfig{tczwwphase}
    \end{equation}
    Its correctness can be verified by direct computation, or by commuting $S^\dagger$ and $\tau_{(02)}$.
\end{proof}

\begin{lemma}
    \label{lemma:tcmhdag}
    The $\ket{2}$-controlled $-H^\dagger$ gate can be constructed unitarily without ancillae using Clifford+$T$ gates with $T$-count~24.
\end{lemma}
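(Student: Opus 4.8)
The plan is to reduce the controlled gate to pieces we have already built, using the Euler-angle decomposition of $H^\dagger$ recorded in Eq.~\eqref{eq:EUrule2}, namely $-H^\dagger = Z(1,1)X(1,1)Z(1,1)$, together with the fact that the $\ket{2}$-control distributes over composition: since a $\ket{2}$-controlled gate is block-diagonal in the control register, the $\ket{2}$-controlled version of a product $ABC$ is just the product of the $\ket{2}$-controlled versions of $A$, $B$, and $C$. Thus it suffices to implement $\ket{2}$-controlled $Z(1,1)$ and $\ket{2}$-controlled $X(1,1)$ and to compose them as $Z(1,1)$–$X(1,1)$–$Z(1,1)$.

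First I would take the $\ket{2}$-controlled $Z(1,1)$ gate from the preceding Corollary, which has $T$-count $8$ and is correct up to a controlled global phase of $\zeta^{7}$. For the middle factor I would use the defining relation $X(1,1) = H\,Z(1,1)\,H^\dagger$ (Definition~\ref{def:X-phase-gate}): conjugating the $\ket{2}$-controlled $Z(1,1)$ circuit by $H$ and $H^\dagger$ acting on the \emph{target} qutrit produces $\ket{2}$-controlled $X(1,1)$. Because these Hadamards are Clifford and act on a different tensor factor than the controlled global phase, they change neither the $T$-count nor the residual controlled phase; so this block is also correct up to controlled phase $\zeta^{7}$ with $T$-count $8$. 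Composing the three blocks gives a total $T$-count of $3\times 8 = 24$, and since the controlled phases are scalars that commute through the circuit, they accumulate multiplicatively to $\zeta^{7}\cdot\zeta^{7}\cdot\zeta^{7} = \zeta^{21} = \zeta^{3} = \omega$. Hence the composed circuit realises, up to a controlled global phase of $\omega$, the gate $\ket{2}$-controlled $\bigl(Z(1,1)X(1,1)Z(1,1)\bigr) = \ket{2}$-controlled $(-H^\dagger)$.

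The step I expect to be the crux is disposing of this leftover controlled global phase without spending additional $T$ gates. The key observation is that a phase of $\omega$ applied to the target precisely when the control is in $\ket{2}$ is, as an operator, $\operatorname{diag}(1,1,\omega)$ on the control tensored with the identity on the target; that is, it is exactly the Clifford gate $S$ acting on the control qutrit. I would therefore prepend a single $S^\dagger$ on the control wire, which commutes with the $\ket{2}$-controlled gate (both are diagonal in the control) and cancels the stray $\omega$, leaving the desired $\ket{2}$-controlled $-H^\dagger$ exactly. Since $S^\dagger$ is Clifford, the $T$-count remains $24$. The delicate point worth emphasising is that $\zeta^{21}$ happens to reduce to the \emph{Clifford} phase $\omega = \zeta^3$ rather than to a genuinely non-Clifford phase such as $\zeta$ or $\zeta^2$; were the accumulated phase not a power of $\omega$, it could not be removed by Clifford gates alone and the stated $T$-count of $24$ would not be attainable by this route.
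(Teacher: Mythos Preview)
Your proposal is correct and follows essentially the same approach as the paper: decompose $-H^\dagger = Z(1,1)X(1,1)Z(1,1)$ via Eq.~\eqref{eq:EUrule2}, build each factor from the $T$-count~$8$ $\ket{2}$-controlled $Z(1,1)$ of the preceding Corollary (conjugating the target by $H,H^\dagger$ for the middle $X(1,1)$), observe that the accumulated controlled phase $\zeta^{21}=\omega$ is Clifford, and cancel it with an $S^\dagger$ on the control. This is exactly the paper's argument, including the same $T$-count of $24$ and the same $S^\dagger$ correction.
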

\begin{proof}
    In the construction given below, we use the decomposition of $-H^\dagger$ into alternating Z and X Clifford rotations of Eq.~\eqref{eq:EUrule2}.
    \begin{equation}
        \label{eq:tcmhdag}
        \tikzfig{tcmhdag}
    \end{equation}
    To construct the controlled $-H^\dagger$ up to a controlled global phase, we apply Eq.~\eqref{eq:tczwwphase}, conjugating the target by Hadamards for the X rotations per Definition~\ref{def:X-phase-gate}.  As we require three such gates, the controlled global phase becomes $\zeta^{7\cdot 3} = \zeta^3 = \omega$; thus, the necessary correction is the Clifford $S^\dagger$ gate on the control qutrit (i.e. $\ket{2}$-controlled $\omega^2 \mathbb{I}$).
\end{proof}

We can now construct the $R$ gate in Clifford+$T$. 
However, direct substitution of the 24 $T$-count $\ket{2}$-controlled $-H^\dagger$ gate of Lemma~\ref{lemma:tcmhdag} into Eq.~\eqref{eq:R-circuit} yields a $T$-count~$63$ construction. 
We can do better by combining the two iterations of the controlled $-H^\dagger$ in a smarter way.
\begin{theorem}
    The qutrit $R$ gate can be constructed unitarily in Clifford+$T$ with $T$-count~39, provided there is a borrowed (i.e. returned to its starting state) ancilla available.
\end{theorem}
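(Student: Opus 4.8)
The plan is to realize $R$ via the phase-kickback circuit of Eq.~\eqref{eq:R-circuit}. Using the relation $\tau_{(12)} = -H^2$ together with $H^\dagger = H^{-1}$, one has $(-H^\dagger)\,\tau_{(12)}\,(-H^\dagger) = -\mathbb{I}$, so the $\ket{2}$-controlled $-\mathbb{I}$ factors as a $\ket{2}$-controlled $-H^\dagger$ (Lemma~\ref{lemma:tcmhdag}), followed by the $\ket{2}$-controlled $\tau_{(12)}$ (Lemma~\ref{lemma:tcd12}), followed by a second $\ket{2}$-controlled $-H^\dagger$. The borrowed ancilla is the partner qutrit on which $-\mathbb{I}$ acts: conditioned on the control being $\ket{2}$ it receives only the global phase $-1$, so it is returned exactly to its initial state while that phase kicks back as $R$ on the first qutrit. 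Substituting the two $T$-count-$24$ gates of Lemma~\ref{lemma:tcmhdag} and the $T$-count-$15$ gate of Lemma~\ref{lemma:tcd12} gives the naive total $24+15+24 = 63$, and the real work is to shave this down to $39 = 24 + 15$, i.e.\ to the cost of a single controlled $-H^\dagger$ plus one controlled $\tau_{(12)}$.

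First I would unfold each $\ket{2}$-controlled $-H^\dagger$ using the Euler decomposition $-H^\dagger = Z(1,1)\,X(1,1)\,Z(1,1)$ of Eq.~\eqref{eq:EUrule2}, so that each becomes three $\ket{2}$-controlled $Z(1,1)$ gates (the middle one conjugated on the target by Hadamards, per Definition~\ref{def:X-phase-gate}) together with the Clifford $S^\dagger$ phase correction, each controlled $Z(1,1)$ costing $8$ by Eq.~\eqref{eq:tczwwphase}. This exposes the full circuit as six $\ket{2}$-controlled $Z(1,1)$ blocks surrounding the central $\ket{2}$-controlled $\tau_{(12)}$. A first simplification is that $Z(1,1) = \mathrm{diag}(1,\omega,\omega)$ is symmetric in the $\ket{1},\ket{2}$ subspace and hence commutes with $\tau_{(12)}$; the two innermost controlled $Z(1,1)$ gates then commute through the central $\ket{2}$-controlled $\tau_{(12)}$ and merge into a single $\ket{2}$-controlled $Z(1,1)^2$, removing one block.

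To eliminate the remaining surplus blocks and arrive at exactly three controlled $Z(1,1)$ gates, I would bring in the borrowed ancilla to fuse the two outer Hadamard-conjugated layers: the idea is to route the target through the controlled $-H^\dagger$ construction only once and replay its purely Clifford wrapping on the far side of the controlled $\tau_{(12)}$ for free, so that both occurrences of $-H^\dagger$ together cost a single controlled $-H^\dagger$ worth of $T$ gates. Collecting the surviving three $\ket{2}$-controlled $Z(1,1)$ gates ($3\times 8 = 24$) and the single $\ket{2}$-controlled $\tau_{(12)}$ ($15$) yields the claimed $T$-count $39$. Throughout, the Clifford phase corrections (the $S^\dagger$ on the control cancelling the accumulated $\zeta^{7}$ factors of each controlled $Z(1,1)$, exactly as in Lemma~\ref{lemma:tcmhdag}) must be tracked so that the net controlled phase is precisely $-1$ and nothing else.

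The main obstacle is this global-phase bookkeeping combined with verifying that the borrowed ancilla is returned. Because controlled global phases are physical here, each cheap building block is correct only up to a controlled phase ($\zeta$, $\zeta^{7}$, or $\omega$), and the commuting and fusing steps rearrange which blocks those phases attach to; a naive count of Cliffords as ``free'' can silently leave an uncancelled controlled phase that spoils the kickback. I would therefore finish by a direct matrix computation of the final circuit (matching the accompanying implementation), checking that it equals $R$ on the first qutrit tensored with the identity on the ancilla register, and in particular that the borrowed ancilla returns to its initial state with the \emph{only} surviving phase being the $-1$ that realizes $R$.
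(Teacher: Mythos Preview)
Your high-level plan matches the paper's: realise $R\otimes\mathbb{I}$ as a $\ket{2}$-controlled $-\mathbb{I}$ via phase kickback, and aim for $3\times 8 + 15 = 39$. But the reduction from $63$ to $39$ is where the content lies, and your argument for it does not go through.

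Your one concrete simplification---commuting $Z(1,1)$ through $\tau_{(12)}$ to merge the two innermost controlled $Z(1,1)$ blocks---is valid, but it only collapses six blocks to five. The step meant to remove two more blocks (``route the target through the controlled $-H^\dagger$ construction only once and replay its purely Clifford wrapping on the far side'') is not an argument: the controlled $-H^\dagger$ is built from three genuinely non-Clifford $8$-$T$ subroutines, and there is no Clifford ``wrapping'' whose replay yields a second application for free. You also speak of ``bringing in the borrowed ancilla'' as if it were an extra resource, but the borrowed ancilla \emph{is} the second qutrit already present in Eq.~\eqref{eq:R-circuit}; there is nothing further to bring in. Finally, note that $-\tau_{(12)}$ has no three-term $Z(a,b)X(c,d)Z(e,f)$ Euler decomposition (the $(0,0)$ matrix entry forces $1+\omega^c+\omega^d=-3$, impossible), so the six-block $ZXZ\cdot ZXZ$ expansion cannot be compressed to three blocks by naive phase-merging alone.

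The paper does not simplify the $63$-$T$ circuit at all. Instead it writes down a \emph{different} three-block circuit and verifies, by direct computation, that it implements the $\ket{2}$-controlled $-\tau_{(12)}$: when the control is $\ket{0}$ or $\ket{1}$ the target sees $Z(3,3)=\mathbb{I}$, and when the control is $\ket{2}$ the target sees $(H^\dagger)^2=-\tau_{(12)}$. Composing this $24$-$T$ gadget with the $15$-$T$ controlled $\tau_{(12)}$ from Lemma~\ref{lemma:tcd12} gives the controlled $-\mathbb{I}$ and hence $R\otimes\mathbb{I}$. The missing idea in your attempt is precisely this direct construction (and its verification); once you have it, the phase bookkeeping you worry about is absorbed into a single Clifford correction on the control, just as in Lemma~\ref{lemma:tcmhdag}.
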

\begin{proof}
    The equality of the circuits below can be verified by direct computation or by noting that it applies $Z(3,3) = \mathbb{I}$ to the target when the control is $\ket{0}$ or $\ket{1}$, and $(H^\dagger)^2 = -\tau_{(12)}$ otherwise.
    \begin{equation}
        \label{eq:tcmd12uptoglobalphase}
        \tikzfig{tcmd12uptoglobalphase}
    \end{equation}
    To get a circuit for the $R$ gate we simply bring the $\ket{2}$-controlled $\tau_{(12)}$ to the other side (as it is its own inverse). The total $T$-count of the resulting circuit is then $15+3\cdot 8 = 39$.
\end{proof}

As we can construct the $R$ gate as a Clifford+$T$ circuit, any unitary that can be exactly constructed in the Clifford+$R$ gate set can then be exactly (as opposed to approximately) constructed in the Clifford+$T$ gate set. 
Although do note that our conversion presently seems rather inefficient, as the circuit in Eq.~\eqref{eq:R-circuit} requires 39 $T$ gates.
\begin{corollary}
	\label{cor:inclusion}
    The Clifford+$R$ gate set is a subset of the Clifford+$T$ gate set.
\end{corollary}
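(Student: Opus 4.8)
The plan is to argue by a straightforward gate-by-gate substitution, with the Theorem supplying the only non-trivial ingredient. By definition, the Clifford+$R$ gate set consists of the Clifford generators ($S$, $H$, and $\mathrm{CX}$) together with the $R$ gate. Every Clifford gate lies in Clifford+$T$ trivially, since the Clifford gates are already among the generators of Clifford+$T$. The one remaining generator to account for is $R$, and this is exactly what the Theorem provides: it exhibits $R$ as a unitary Clifford+$T$ circuit (of $T$-count $39$) acting on the intended qutrit, at the cost of a single borrowed ancilla that is returned to its initial state.

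First I would fix an arbitrary unitary $U$ implementable over Clifford+$R$, presented as a circuit $U = G_m \cdots G_1$ with each $G_i$ either a Clifford generator or an $R$ gate on some wire. I would then build a replacement circuit over Clifford+$T$ by leaving each Clifford $G_i$ untouched and substituting, for every occurrence of $R$, the Clifford+$T$ implementation of the Theorem. Because that implementation realises $R \otimes \mathbb{I}$ — it acts as $R$ on the intended wire and as the identity on the ancilla — composing these replacements in order reproduces $U$ exactly on the computational wires, which gives $U \in {}$Clifford+$T$.

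The only point requiring care is the borrowed ancilla demanded by each $R$-substitution. In any circuit on $n \ge 2$ qutrits there is always another qutrit available to borrow, and since the Theorem's construction returns that qutrit to its starting state, the same borrowed wire may be reused across all the $R$-substitutions without interfering with the surrounding gates or with one another. This is where the hypothesis that we have at least two qutrits enters, and it is precisely the place where the inclusion could fail: on a single qutrit no ancilla can be borrowed, and (as shown later in the paper) $R$ admits no ancilla-free single-qutrit Clifford+$T$ decomposition, so the inclusion is genuinely a statement about the multi-qutrit setting. I expect this bookkeeping — checking that the borrowed ancilla induces no cross-talk — to be the only real obstacle, and it is a mild one; the substantive content is entirely contained in the Theorem.
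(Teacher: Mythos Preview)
Your proposal is correct and matches the paper's own approach: the corollary is stated immediately after the Theorem with only a one-line justification (``As we can construct the $R$ gate as a Clifford+$T$ circuit, any unitary \dots\ can then be exactly \dots\ constructed in the Clifford+$T$ gate set''), and your gate-by-gate substitution is exactly the intended argument, just spelled out in more detail. Your remark about needing $n\ge 2$ qutrits so that a borrowed ancilla is available is a valid and slightly more careful observation than the paper makes explicit at this point, though it is consistent with the paper's framing in the abstract.
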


A natural question to ask now is whether we can do better. Do we
really need two qutrits to write the $R$ gate as a Clifford+$T$
unitary? The answer is \emph{yes}: it is not possible to construct the
$R$ gate using just single-qutrit Clifford+$T$ gates.  This
follows from the normal form that was found for single-qutrit
Clifford+$T$ unitaries in Ref.~\cite{GlaudellA2019canonical}.  Since
the proof of this is rather technical we present the details in
Appendix~\ref{sec:R-not-in-T-proof}, and just give a sketch here.

The group of $3\times 3$ unitary matrices acts on the 8-dimensional
real vector space of traceless Hermitian matrices. This action
defines, for each $3\times 3$ unitary matrix $U$, an $8\times 8$ real
matrix~$\overline{U}$ known as the \emph{adjoint representation} of
$U$. One can then gather information about $U$ by studying its adjoint
representation $\overline{U}$. In particular, it is a consequence of
the normal forms for single-qutrit Clifford+$T$ circuits introduced in Ref.~\cite{GlaudellA2019canonical} 
that the adjoint representation of a
single-qutrit Clifford+$T$ operator has a very specific block matrix
form (see Proposition~\ref{prop:adjointstructure} in
Appendix~\ref{sec:R-not-in-T-proof} below). It can then be shown by
computation that $\overline{R}$ is not of the appropriate form and
therefore not Clifford+$T$ .

Another natural question is the converse to Corollary~\ref{cor:inclusion}: is the $T$ gate included in Clifford+$R$? I.e., is the inclusion of Clifford+$R$ within Clifford+$T$ strict? We will show that this is indeed the case. We begin by considering matrix representations of circuits over Clifford+$R$.

\begin{lemma}
	\label{lemma:ringofgens}
	Let $U$ be a qutrit Clifford+$R$ unitary. Then up to a global phase $U$ has a matrix representation in the computational basis with entries in the number ring $\mathbb{T}[\omega]$.
\end{lemma}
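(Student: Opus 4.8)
The plan is to reduce this to a direct application of Lemma~\ref{lemma:gatesetring}, the gate-set-ring principle. That lemma guarantees that if every generator of a gate set has a computational-basis matrix representation (up to global phase) with entries in a fixed number ring, then so does every circuit built from those generators. So it suffices to exhibit a generating set for Clifford+$R$ and check each generator individually.

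First I would fix the generators. By Proposition~\ref{prop:clifford}, every Clifford unitary is, up to global phase, a composition of $S$, $H$, and $\text{CX}$; together with the $R$ gate these generate the Clifford+$R$ gate set up to global phase. For the Clifford part the work is already done: the Corollary following Proposition~\ref{prop:clifford} records that $S$, $H$, and $\text{CX}$ all have representations over $\mathbb{T}[\omega]$. Concretely, $S = \mathrm{diag}(1,1,\omega)$ and $\text{CX}$ is a $0/1$ permutation matrix, both patently in $\mathbb{Z}[\omega]\subseteq\mathbb{T}[\omega]$, while the Hadamard's entries $\tfrac{\omega^k}{\omega^2-\omega} = \tfrac{\omega^k(\omega-\omega^2)}{3}$ lie in $\mathbb{T}[\omega]$ precisely because of the triadic denominator introduced by localizing at $\mu=\{3^k\}$.

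The only genuinely new ingredient is the $R$ gate, and here the point to emphasize is exactly the one that drives the eventual strictness result: since $R = \mathrm{diag}(1,1,-1)$ has entries drawn from $\{0,1,-1\}\subseteq\mathbb{Z}[\omega]\subseteq\mathbb{T}[\omega]$, it introduces nothing beyond $\mathbb{T}[\omega]$ — in particular it never forces in the ninth root of unity $\zeta$, in sharp contrast to the $T$ gate. With all four generators verified to have entries in $\mathbb{T}[\omega]$, Lemma~\ref{lemma:gatesetring} immediately yields that an arbitrary Clifford+$R$ unitary has, up to global phase, a computational-basis matrix representation over $\mathbb{T}[\omega]$, completing the argument.

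There is essentially no obstacle in this proof; it is a bookkeeping application of the machinery already assembled. If anything warrants a second's care it is confirming that $H$ stays within $\mathbb{T}[\omega]$ rather than requiring a larger localization, but this is settled by the identity $\tfrac{1}{\omega^2-\omega}=\tfrac{\omega-\omega^2}{3}$ established earlier. The substantive content of the lemma is conceptual rather than computational: it pins down $\mathbb{T}[\omega]$ as the ambient ring for Clifford+$R$, thereby setting up the contrast with $\mathbb{T}[\zeta]$ needed to separate $T$ from Clifford+$R$.
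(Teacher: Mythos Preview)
Your proposal is correct and follows exactly the paper's own approach: verify that the generators $S$, $H$, $\text{CX}$, and $R$ each have computational-basis matrices over $\mathbb{T}[\omega]$ and then invoke Lemma~\ref{lemma:gatesetring}. The paper's proof is simply the one-line version of what you wrote.
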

\begin{proof}
By the definitions of $S$, $H$, $R$, and CX and Lemma~\ref{lemma:gatesetring}.
\end{proof}

\begin{proposition}
	\label{prop:tnotinrtwo}
	$T\not\in$ Clifford+$R$
\end{proposition}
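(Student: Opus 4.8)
The plan is to exploit the number-ring constraint from Lemma~\ref{lemma:ringofgens} together with the fact, established in Appendix~\ref{app:rootofunity}, that $\zeta\notin\mathbb{T}[\omega]$. Concretely, Lemma~\ref{lemma:ringofgens} tells us that every Clifford+$R$ unitary, up to a global phase, has all of its computational-basis matrix entries in $\mathbb{T}[\omega]$. So it suffices to show that no global phase can move the entries of $T=\mathrm{diag}(1,\zeta,\zeta^8)$ into $\mathbb{T}[\omega]$; this would place $T$ outside the reach of Lemma~\ref{lemma:ringofgens} and hence outside Clifford+$R$.

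To carry this out, I would argue by contradiction. Suppose $T\in$ Clifford+$R$. Then there is a phase $\lambda\in\mathbb{C}$ with $|\lambda|=1$ such that every entry of $\lambda T$ lies in $\mathbb{T}[\omega]$. Since $\lambda T=\mathrm{diag}(\lambda,\lambda\zeta,\lambda\zeta^8)$, reading off the first two diagonal entries gives $\lambda\in\mathbb{T}[\omega]$ and $\lambda\zeta\in\mathbb{T}[\omega]$. The key observation is that $\mathbb{T}[\omega]$ is closed under complex conjugation: for $a,b\in\mathbb{T}$ we have $\overline{a+b\omega}=a+b\omega^2=(a-b)-b\omega\in\mathbb{T}[\omega]$, using $\omega^2=-1-\omega$. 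Because $|\lambda|=1$ we have $\overline{\lambda}=\lambda^{-1}$, and by closure $\overline{\lambda}\in\mathbb{T}[\omega]$. Therefore
\[
\zeta=(\lambda\zeta)\cdot\overline{\lambda}\in\mathbb{T}[\omega],
\]
since $\mathbb{T}[\omega]$ is a ring and both factors belong to it. This contradicts $\zeta\notin\mathbb{T}[\omega]$, so $T\notin$ Clifford+$R$.

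The real content of the argument sits entirely in the two ingredients it calls on, and these are where I expect any difficulty to lie. The first is the number-theoretic fact that $\zeta$ is not already contained in $\mathbb{T}[\omega]$, equivalently that the polynomial $\zeta^3-\omega$ of Eq.~\eqref{eq:relation2} has no root there; this is precisely the statement deferred to Appendix~\ref{app:rootofunity}, and it is the genuine obstacle since everything else is formal. One clean way to see it is via field extensions: localizing does not change fields of fractions, so $\mathbb{T}[\omega]\subseteq\mathbb{Q}(\omega)$, and since $\omega=\zeta^3$ we have $\mathbb{Q}(\omega)\subseteq\mathbb{Q}(\zeta)$ with $[\mathbb{Q}(\zeta):\mathbb{Q}]=\varphi(9)=6$ and $[\mathbb{Q}(\omega):\mathbb{Q}]=\varphi(3)=2$; as $3=[\mathbb{Q}(\zeta):\mathbb{Q}(\omega)]>1$, the element $\zeta$ cannot lie in $\mathbb{Q}(\omega)$, a fortiori not in $\mathbb{T}[\omega]$. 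The second ingredient is the closure-under-conjugation trick, which is what lets the single modulus-one phase $\lambda$ be inverted inside the ring and thereby cancelled; this neatly disposes of the global-phase freedom that Lemma~\ref{lemma:ringofgens} leaves open, and is the step one might otherwise be tempted to handle by a messier direct analysis of the modulus-one elements of $\mathbb{T}[\omega]$.
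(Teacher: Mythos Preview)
Your proof is correct and is essentially identical to the paper's: both invoke Lemma~\ref{lemma:ringofgens}, use closure of $\mathbb{T}[\omega]$ under complex conjugation to cancel the global phase, and appeal to $\zeta\notin\mathbb{T}[\omega]$. The only cosmetic difference is that the paper explicitly phrases membership in Clifford+$R$ as the existence of a circuit for $T\otimes\mathbb{I}_n$ (allowing ancillae), but since the diagonal entries of $\lambda(T\otimes\mathbb{I}_n)$ still include $\lambda$ and $\lambda\zeta$, your argument covers this case verbatim.
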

\begin{proof}
	We have $T\in$ Clifford+$R$ if there exists a unitary circuit over Clifford+$R$ which performs the operation $T\otimes \mathbb{I}_n$ up to a global phase for some $n\in\mathbb{N}$ where $\mathbb{I}_n$ is the $n$-qutrit identity. 
	In the computational basis, $T\otimes \mathbb{I}$ has a matrix representation with entries from the set $\{0,1,\zeta,\zeta^8\}$. 
	By Lemma~\ref{lemma:ringofgens}, we know that if $T\otimes \mathbb{I}$ permits an exact circuit over Clifford+$R$ we must have $\{0,c,c\zeta,c\zeta^8\}\subset\mathbb{T}[\omega]$ for at least some global phase $c\in\mathbb{C}$ which satisfies $c^*c=1$. 
	As $\mathbb{T}[\omega]$ is closed under conjugation, we then also have $c^*\in \mathbb{T}[\omega]$, and as it is closed under multiplication we then have $c^*c\zeta = \zeta\in\mathbb{T}[\omega]$.
	However, it is well-known that $\zeta\not\in\mathbb{T}[\omega]$, and so there exists no such global phase $c$ (see Appendix~\ref{app:rootofunity}). Hence, no such suitable $c$ exists. As $n$ was arbitrary, we conclude that no Clifford+$R$ circuit exactly implements $T$ in the computational basis.
\end{proof}

\begin{corollary}
	Clifford+$R$ $\subsetneq$ Clifford+$T$.
\end{corollary}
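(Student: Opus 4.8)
The plan is to assemble the two preceding results into the claimed strict inclusion. The statement $\mathcal{A} \subsetneq \mathcal{B}$ for two gate sets unpacks into two obligations: first, that every Clifford+$R$ unitary is realizable in Clifford+$T$ (the inclusion $\subseteq$), and second, that the inclusion is proper, i.e.\ that Clifford+$T$ contains at least one unitary lying outside Clifford+$R$. Both obligations have essentially already been discharged earlier in the paper, so the proof of the corollary is a short matter of citation and combination.

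First I would invoke Corollary~\ref{cor:inclusion}, which gives the inclusion Clifford+$R$ $\subseteq$ Clifford+$T$ directly. That corollary rests on the Theorem exhibiting $R \otimes \mathbb{I}$ as an explicit two-qutrit Clifford+$T$ circuit of $T$-count~$39$; since $R$ together with the Clifford generators can each be written over Clifford+$T$, any finite composition of them can too, so every Clifford+$R$ circuit is a Clifford+$T$ circuit.

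Second, to upgrade this to a \emph{strict} inclusion I would produce a witness separating the two sets. The natural witness is the $T$ gate itself: trivially $T \in$ Clifford+$T$, since $T$ is one of its generators, whereas Proposition~\ref{prop:tnotinrtwo} establishes $T \notin$ Clifford+$R$. Hence $T$ belongs to Clifford+$T$ but not to Clifford+$R$, which rules out equality and forces the inclusion to be proper.

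There is no real obstacle remaining at the level of the corollary; all the genuine difficulty lives in the two results it cites. The hard content on the inclusion side is the unexpected unitary decomposition of $R$ in the Theorem, and on the strictness side it is the number-ring argument of Proposition~\ref{prop:tnotinrtwo}, which pins the separation on the fact that $\zeta \notin \mathbb{T}[\omega]$ (so that no global phase $c$ can place $\{0,c,c\zeta,c\zeta^8\}$ inside $\mathbb{T}[\omega]$). Given those, the corollary follows immediately by combining $\subseteq$ with the existence of the witness $T$.
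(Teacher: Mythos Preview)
Your proposal is correct and matches the paper's approach exactly: the corollary in the paper is stated without an explicit proof because it follows immediately from combining Corollary~\ref{cor:inclusion} (the inclusion) with Proposition~\ref{prop:tnotinrtwo} (the witness $T$ showing strictness), precisely as you outline.
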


\section{Conclusion}\label{sec:conclusion}
In summary, we showed that the universal fault-tolerant qutrit Clifford+$R$ gate set is a subset of Clifford+$T$, by providing a two-qutrit, $T$-count 39 unitary Clifford+$T$ construction of the $R$ gate.  We prove that our construction is optimal in the number of qutrits by applying the single-qutrit Clifford+$T$ normal form of Glaudell, Ross, and Taylor in Ref.~\cite{GlaudellA2019canonical}.  Moreover, we prove that Clifford+$R$ is a \emph{strict} subset of Clifford+$T$ by showing that regardless of the number of ancillae qutrits, the $T$ gate is impossible to exactly synthesize unitarily in the Clifford+$R$ gate set.

This result is surprising for several reasons.
While several papers have studied the Clifford+$R$ gate set, it was not known that it is a subset of Clifford+$T$, much less a strict subset.  Therefore, we find evidence that the $R$ gate is not more powerful in practice than the $T$ gate, contrary to what was previously believed. 
In fact, we find that Clifford+$T$ is strictly more powerful than Clifford+$R$ for the reason that Clifford+$T$ can exactly synthesize every gate in Clifford+$R$ up to a constant factor of overhead, while the converse is not true.  
The additional gates Clifford+$T$ can represent might be important, as it was for instance conjectured in Ref.~\cite{BocharovA2016ternaryarithmetics} that not all ternary classical reversible gates can be exactly represented in Clifford+$R$, while they can be exactly represented in Clifford+$T$ (a consequence of a result in forthcoming work~\cite{YehL2022qutritcontrols}).
Additionally, our result also means that much of the work done on Clifford+$R$ can now be directly translated to the Clifford+$T$ setting.  For example, the universal approximate synthesis algorithms of~\cite{BocharovA2016topologicalcompilation,BocharovA2016optimalitymetaplectic} can now also be used to synthesise Clifford+$T$ circuits.
One final observation is that our result also demonstrates a way in which qutrit Clifford+$T$ is different from that of qubit Clifford+$T$.
While all the one-qubit Clifford+$T$ circuits that can be constructed with and without ancillae coincide~\cite{KliuchnikovV2013qubitctrlt}, our result shows that this is not true for qutrits, as the single-qutrit $R$ gate cannot be constructed in single-qutrit Clifford+$T$, but can be constructed using one borrowed ancilla.

A natural starting point for future work is to find a lower $T$-count decomposition of the $R$ gate as we have here only attempted preliminary circuit simplification.  
Beyond continued search for more optimal decompositions, an alternate approach to ascertain a lower bound on the $T$-count necessary to prepare the $R$ state would be through leveraging the resource theory of non-stabiliser states, for instance the mana~\cite{VeitchV2014resourcetheorystab} and thauma~\cite{WangX2020boundmsd} measures of magic.
Alternatively, it might be possible to find a normal form for multi-qutrit Clifford+$T$ unitaries which is $T$-optimal,
which would then also give us an optimal decomposition of the $R$ gate.

Our results also pave the way to deriving a full characterisation of which qutrit unitaries can be exactly implemented over the Clifford+$T$ gate set. We conjecture that, as in the qubit case~\cite{GilesB2013multiqubitcliffordplustsynthesis}, any qutrit unitary with entries in $\mathbb{T}[\zeta]$ can be exactly synthesised over Clifford+$T$.

\bibliography{RgateinqutritCliffordplusT}

\begin{thebibliography}{10}

\bibitem{AnwarH2012r2distillation}
Hussain Anwar, Earl~T Campbell, and Dan~E Browne.
\newblock Qutrit magic state distillation.
\newblock {\em New Journal of Physics}, 14(6):063006, jun 2012.
\newblock \href {https://doi.org/10.1088/1367-2630/14/6/063006}
  {\path{doi:10.1088/1367-2630/14/6/063006}}.

\bibitem{BlokM2021scrambling}
M.~S. Blok, V.~V. Ramasesh, T.~Schuster, K.~O'Brien, J.~M. Kreikebaum,
  D.~Dahlen, A.~Morvan, B.~Yoshida, N.~Y. Yao, and I.~Siddiqi.
\newblock Quantum information scrambling on a superconducting qutrit processor.
\newblock {\em Phys. Rev. X}, 11:021010, Apr 2021.
\newblock URL: \url{https://link.aps.org/doi/10.1103/PhysRevX.11.021010}, \href
  {https://doi.org/10.1103/PhysRevX.11.021010}
  {\path{doi:10.1103/PhysRevX.11.021010}}.

\bibitem{BocharovA2016optimalitymetaplectic}
Alex Bocharov.
\newblock A note on optimality of quantum circuits over metaplectic basis.
\newblock {\em Quantum Information and Computation}, 18, 06 2016.
\newblock \href {https://doi.org/10.26421/QIC18.1-2-1}
  {\path{doi:10.26421/QIC18.1-2-1}}.

\bibitem{BocharovA2016ternaryarithmetics}
Alex Bocharov, Shawn Cui, Martin Roetteler, and Krysta Svore.
\newblock Improved quantum ternary arithmetics.
\newblock {\em Quantum Information and Computation}, 16:862--884, 07 2016.
\newblock \href {https://doi.org/10.26421/QIC16.9-10-8}
  {\path{doi:10.26421/QIC16.9-10-8}}.

\bibitem{BocharovA2016topologicalcompilation}
Alex Bocharov, Xingshan Cui, Vadym Kliuchnikov, and Zhenghan Wang.
\newblock Efficient topological compilation for a weakly integral anyonic
  model.
\newblock {\em Physical Review A}, 93(1), Jan 2016.
\newblock URL: \url{http://dx.doi.org/10.1103/PhysRevA.93.012313}, \href
  {https://doi.org/10.1103/physreva.93.012313}
  {\path{doi:10.1103/physreva.93.012313}}.

\bibitem{BocharovA2017ternaryshor}
Alex Bocharov, Martin Roetteler, and Krysta~M. Svore.
\newblock Factoring with qutrits: Shor's algorithm on ternary and metaplectic
  quantum architectures.
\newblock {\em Phys. Rev. A}, 96:012306, Jul 2017.
\newblock URL: \url{https://link.aps.org/doi/10.1103/PhysRevA.96.012306}, \href
  {https://doi.org/10.1103/PhysRevA.96.012306}
  {\path{doi:10.1103/PhysRevA.96.012306}}.

\bibitem{Microsoft2021metaplecticanyonpatent}
Alexei Bocharov, Zhenghan Wang, Xingshan Cui, and Vadym Kliuchnikov.
\newblock Efficient topological compilation for metaplectic anyon model, May
  2021.

\bibitem{BravyiS2005ftqc}
Sergey Bravyi and Alexei Kitaev.
\newblock Universal quantum computation with ideal {C}lifford gates and noisy
  ancillas.
\newblock {\em Phys. Rev. A}, 71:022316, Feb 2005.
\newblock URL: \url{https://link.aps.org/doi/10.1103/PhysRevA.71.022316}, \href
  {https://doi.org/10.1103/PhysRevA.71.022316}
  {\path{doi:10.1103/PhysRevA.71.022316}}.

\bibitem{CampbellE2012tgatedistillation}
Earl~T. Campbell, Hussain Anwar, and Dan~E. Browne.
\newblock Magic-state distillation in all prime dimensions using quantum
  {R}eed-{M}uller codes.
\newblock {\em Phys. Rev. X}, 2:041021, Dec 2012.
\newblock URL: \url{https://link.aps.org/doi/10.1103/PhysRevX.2.041021}, \href
  {https://doi.org/10.1103/PhysRevX.2.041021}
  {\path{doi:10.1103/PhysRevX.2.041021}}.

\bibitem{CuiS2017Diagonalhierarchy}
Shawn~X. Cui, Daniel Gottesman, and Anirudh Krishna.
\newblock Diagonal gates in the {C}lifford hierarchy.
\newblock {\em Phys. Rev. A}, 95:012329, Jan 2017.
\newblock URL: \url{https://link.aps.org/doi/10.1103/PhysRevA.95.012329}, \href
  {https://doi.org/10.1103/PhysRevA.95.012329}
  {\path{doi:10.1103/PhysRevA.95.012329}}.

\bibitem{CuiS2015universalweakly}
Shawn~X. Cui, Seung-Moon Hong, and Zhenghan Wang.
\newblock Universal quantum computation with weakly integral anyons.
\newblock {\em Quantum Information Processing}, 14(8):2687–2727, May 2015.
\newblock URL: \url{http://dx.doi.org/10.1007/s11128-015-1016-y}, \href
  {https://doi.org/10.1007/s11128-015-1016-y}
  {\path{doi:10.1007/s11128-015-1016-y}}.

\bibitem{CuiS2015universalmetaplectic}
Shawn~X. Cui and Zhenghan Wang.
\newblock Universal quantum computation with metaplectic anyons.
\newblock {\em Journal of Mathematical Physics}, 56(3):032202, Mar 2015.
\newblock URL: \url{http://dx.doi.org/10.1063/1.4914941}, \href
  {https://doi.org/10.1063/1.4914941} {\path{doi:10.1063/1.4914941}}.

\bibitem{GilesB2013multiqubitcliffordplustsynthesis}
Brett Giles and Peter Selinger.
\newblock Exact synthesis of multiqubit clifford+t circuits.
\newblock {\em Physical Review A}, 87(3), Mar 2013.
\newblock URL: \url{http://dx.doi.org/10.1103/PhysRevA.87.032332}, \href
  {https://doi.org/10.1103/physreva.87.032332}
  {\path{doi:10.1103/physreva.87.032332}}.

\bibitem{GlaudellA2019canonical}
Andrew~N. Glaudell, Neil~J. Ross, and Jacob~M. Taylor.
\newblock Canonical forms for single-qutrit {C}lifford+{$T$} operators.
\newblock {\em Annals of Physics}, 406:54–70, Jul 2019.
\newblock \href {https://doi.org/10.1016/j.aop.2019.04.001}
  {\path{doi:10.1016/j.aop.2019.04.001}}.

\bibitem{GongX2017equivalence}
Xiaoyan Gong and Quanlong Wang.
\newblock Equivalence of local complementation and {E}uler decomposition in the
  qutrit zx-calculus, 2017.
\newblock \href {http://arxiv.org/abs/1704.05955} {\path{arXiv:1704.05955}}.

\bibitem{GottesmanD1999ftqudit}
Daniel Gottesman.
\newblock Fault-tolerant quantum computation with higher-dimensional systems.
\newblock {\em Chaos, Solitons \& Fractals}, 10(10):1749–1758, Sep 1999.
\newblock URL: \url{http://dx.doi.org/10.1016/S0960-0779(98)00218-5}, \href
  {https://doi.org/10.1016/s0960-0779(98)00218-5}
  {\path{doi:10.1016/s0960-0779(98)00218-5}}.

\bibitem{HowardM2012quditTgate}
Mark Howard and Jiri Vala.
\newblock Qudit versions of the qubit $\ensuremath{\pi}/8$ gate.
\newblock {\em Phys. Rev. A}, 86:022316, Aug 2012.
\newblock URL: \url{https://link.aps.org/doi/10.1103/PhysRevA.86.022316}, \href
  {https://doi.org/10.1103/PhysRevA.86.022316}
  {\path{doi:10.1103/PhysRevA.86.022316}}.

\bibitem{KliuchnikovV2013qubitctrlt}
Vadym Kliuchnikov, Dmitri Maslov, and Michele Mosca.
\newblock Fast and efficient exact synthesis of single-qubit unitaries
  generated by {C}lifford and {$T$} gates.
\newblock {\em Quantum Info. Comput.}, 13(7–8):607–630, jul 2013.

\bibitem{PrakashS2018normalform}
Shiroman Prakash, Akalank Jain, Bhakti Kapur, and Shubangi Seth.
\newblock Normal form for single-qutrit {C}lifford+{$T$} operators and
  synthesis of single-qutrit gates.
\newblock {\em Physical Review A}, 98(3), Sep 2018.
\newblock URL: \url{http://dx.doi.org/10.1103/PhysRevA.98.032304}, \href
  {https://doi.org/10.1103/physreva.98.032304}
  {\path{doi:10.1103/physreva.98.032304}}.

\bibitem{RingbauerM2021quditions}
Martin Ringbauer, Michael Meth, Lukas Postler, Roman Stricker, Rainer Blatt,
  Philipp Schindler, and Thomas Monz.
\newblock A universal qudit quantum processor with trapped ions, 2021.
\newblock \href {http://arxiv.org/abs/2109.06903} {\path{arXiv:2109.06903}}.

\bibitem{VeitchV2014resourcetheorystab}
Victor Veitch, S~A~Hamed Mousavian, Daniel Gottesman, and Joseph Emerson.
\newblock The resource theory of stabilizer quantum computation.
\newblock {\em New Journal of Physics}, 16(1):013009, jan 2014.
\newblock \href {https://doi.org/10.1088/1367-2630/16/1/013009}
  {\path{doi:10.1088/1367-2630/16/1/013009}}.

\bibitem{WangX2020boundmsd}
Xin Wang, Mark~M. Wilde, and Yuan Su.
\newblock Efficiently computable bounds for magic state distillation.
\newblock {\em Physical Review Letters}, 124(9), Mar 2020.
\newblock URL: \url{http://dx.doi.org/10.1103/PhysRevLett.124.090505}, \href
  {https://doi.org/10.1103/physrevlett.124.090505}
  {\path{doi:10.1103/physrevlett.124.090505}}.

\bibitem{WangY2020quditsreview}
Yuchen Wang, Zixuan Hu, Barry~C. Sanders, and Sabre Kais.
\newblock Qudits and high-dimensional quantum computing.
\newblock {\em Frontiers in Physics}, 8:479, 2020.
\newblock URL:
  \url{https://www.frontiersin.org/article/10.3389/fphy.2020.589504}, \href
  {https://doi.org/10.3389/fphy.2020.589504}
  {\path{doi:10.3389/fphy.2020.589504}}.

\bibitem{YeB2018cphasephoton}
Biaoliang Ye, Zhen-Fei Zheng, Yu~Zhang, and Chui-Ping Yang.
\newblock Circuit qed: single-step realization of a multiqubit controlled phase
  gate with one microwave photonic qubit simultaneously controlling $n − 1$
  microwave photonic qubits.
\newblock {\em Optics Express}, 26(23):30689, Nov 2018.
\newblock URL: \url{http://dx.doi.org/10.1364/OE.26.030689}, \href
  {https://doi.org/10.1364/oe.26.030689} {\path{doi:10.1364/oe.26.030689}}.

\bibitem{YehL2022qutritcontrols}
Lia Yeh and John van~de Wetering.
\newblock Fault-tolerant implementations of qutrit logic gates: how to
  construct any controlled {C}lifford+{$T$} gate unitarily.
\newblock 2022.

\bibitem{YurtalanM2020Walsh-Hadamard}
M.~A. Yurtalan, J.~Shi, M.~Kononenko, A.~Lupascu, and S.~Ashhab.
\newblock Implementation of a {W}alsh-{H}adamard gate in a superconducting
  qutrit.
\newblock {\em Phys. Rev. Lett.}, 125:180504, Oct 2020.
\newblock URL: \url{https://link.aps.org/doi/10.1103/PhysRevLett.125.180504},
  \href {https://doi.org/10.1103/PhysRevLett.125.180504}
  {\path{doi:10.1103/PhysRevLett.125.180504}}.

\end{thebibliography}

\appendix

\section{\texorpdfstring{$\zeta\not\in\mathbb{T}[\omega]$}{Primitive ninth roots of unity are not in T[omega]}}
\label{app:rootofunity}
We provide an elementary proof that primitive ninth roots of unity are not elements of $\mathbb{T}[\omega]$. Note that $\zeta\in\mathbb{T}[\omega]$ only if $\zeta\in\mathbb{Q}[\omega]$, where $\mathbb{Q}[\omega]$ is a field. As $\{1,\omega\}$ forms a basis for $\mathbb{Q}[\omega]$ over $\mathbb{Q}$, if $\zeta\in\mathbb{Q}[\omega]$ we would necessarily require some $a,b\in\mathbb{Q}$ such that
\[
\zeta=a+b\omega\implies\zeta^3 = (a+b\omega)^3 = \omega.
\]
Expanding, reducing powers using $1+\omega+\omega^2=0$, and collecting terms, we find
\[
(a^3-3ab^2+b^3) + (3a^2 b-3ab^2)\omega = \omega.
\]
Therefore, by equating coefficients of our basis elements on each side we conclude that we need
\begin{align}
a^3-3ab^2+b^3&=0\label{eq:nulleqn}\\
3a^2 b-3ab^2&=1.\label{eq:oneeqn}
\end{align}
Note that clearly $a,b\neq 0$ if Eq.~\ref{eq:oneeqn} is to be satisfied. Letting $r=a/b\in\mathbb{Q}$, we rearrange Eq.~\ref{eq:nulleqn} and find
\begin{equation}
r^3-3r+1=0.\label{eq:reqn}
\end{equation}
Since $r\neq 0$, let $r=s/t$ for $s,t\in\mathbb{Z}$, $s,t\neq 0$, and $\gcd(s,t)=1$ without loss of generality. Necessarily, we would have
\begin{equation}
\label{eq:intequation}
s^3-3st^2+t^3=0. 
\end{equation}
For any prime $p\mid s$, we clearly have $p\mid t^3$ implying $p\mid t$. Similarly, for any prime $q\mid t$, we must have $q\mid s^3$ and thus $q\mid s$. As we have assumed $\gcd(s,t)=1$, we conclude that no prime can divide $s$ nor $t$ and so $s,t$ must be units in $\mathbb{Z}$ as both are necessarily nonzero. No combination of $s,t =\pm 1$ satisfies Eq.~\ref{eq:intequation}, and thus we conclude no $r\in\mathbb{Q}$ satisfies Eq.~\ref{eq:reqn}. From this, we deduce there are no $a,b\in\mathbb{Q}$ such that
\[
\zeta=a+b\omega
\]
and thus $\zeta\not\in \mathbb{Q}[\omega]\implies\zeta\not\in \mathbb{T}[\omega]$.

\section{The \texorpdfstring{$R$}{R} gate is not a single-qutrit Clifford+\texorpdfstring{$T$}{T} unitary}\label{sec:R-not-in-T-proof}

We start with a set of definitions. These are based on the work done in Ref.~\cite{GlaudellA2019canonical}.

\begin{definition}
    Let $K=\{2^k~|~k\in\mathbb{N}\}$, $\alpha=\sin(2\pi/9)$, and $L=\{\alpha^k~|~k\in\mathbb{N}\}$. We define the following number rings:
    \begin{align*}
        \mathbb{D}&:= K^{-1}\mathbb{Z} = \left\{\frac{a}{2^k}~\middle|~a\in\mathbb{Z}~\text{and}~k\in\mathbb{N}\right\}\\
        \mathbb{D}[\alpha]&=\{a+b\alpha+c\alpha^2+d\alpha^3+e\alpha^4+f\alpha^5~|~a,b,c,d,e,f\in\mathbb{D}\}\\
        \mathbb{A}&:=L^{-1}\mathbb{D}[\alpha] = \left\{\frac{a}{\alpha^k}~\middle|~a\in\mathbb{D}[\alpha]~\text{and}~k\in\mathbb{N}\right\}
    \end{align*}
\end{definition}

Additionally, we will rely on the following quotient ring:

\begin{definition}
    Let $\mathbb{Z}_3:=\mathbb{Z}/(3)$ be the ring of integers modulo 3.
\end{definition}

Using our definitions we can introduce the following ring homomorphism:

\begin{definition}
    Let $\rho:\mathbb{D}[\alpha]\rightarrow\mathbb{Z}_3$ be the ring homomorphism defined by $\rho(q)=q\pmod{\alpha}$ for $q\in\mathbb{D}[\alpha]$. In particular, $\rho(1/2)=2$, $\rho(3)=0$, and $\rho(\alpha)=0$.
\end{definition}

To account for powers of $\alpha$ that appear in the denominator of elements of $\mathbb{A}$, we also introduce the following terminology:

\begin{definition}
    Let $q\in\mathbb{A}$. There always exists some $k\in\mathbb{N}$ for which $\alpha^k q\in\mathbb{D}[\alpha]$. We call $k$ a \emph{denominator exponent} of $q$, and the least such $k$ is called the \emph{least denominator exponent} (LDE). The LDE of a vector or matrix over $\mathbb{A}$ is defined as the largest LDE of their individual elements.
\end{definition}

\begin{definition}
    Let $q\in\mathbb{A}$ and let $k$ be a denominator exponent of $q$. Then the \emph{$k$-residue of $q$}, $\rho_k(q)$ is defined as
    \[
        \rho_k(q) := \rho(\alpha^k q)\in\mathbb{Z}_3.
    \] 
    The $k$-residue of a vector or matrix is defined component-wise.
\end{definition}

The rings we introduced will encompass the entries of Clifford+$T$ matrices in a certain representation called the adjoint representation, which we can describe as follows. Consider the space $\mathbb{H}$ of traceless $3\times 3$ Hermitian
matrices. This space forms an 8-dimensional real vector space and can
be endowed with an inner product by defining $\langle M, M' \rangle
=\Tr(M^\dagger M')$, for any $M,M'\in\mathbb{H}$. As the trace is both cyclic and fixed under transposition of arguments, we have $\langle M,M'\rangle^* = \langle M,M'\rangle$ so that inner product of two traceless Hermitian matrices is necessarily real. It is
straightforward to verify that if $U$ is a $3\times 3$ unitary matrix,
then conjugation by $U$ defines a linear operator on $\mathbb{H}$.

\begin{definition}
    \label{def:adjointrep}
    Let $U$ be a $3\times 3$ unitary matrix. We define the linear
    operator $\overline{U}:\mathbb{H}\to\mathbb{H}$ by
    \(
    \overline{U}(H) = UMU^\dagger
    \)
    for every $M\in\mathbb{H}$. The operator $\overline{U}$ is the
    \emph{adjoint representation} of $U$.
\end{definition}

The adjoint representation $U\mapsto \overline{U}$ defines a group
    homomorphism from $U(3,\mathbb{C})$ to $SO(8,\mathbb{R})$.

For $U$ a Clifford+$T$ operator, we will be interested in the matrix
representation of $\overline{U}$ in some convenient basis. Following
\cite{GlaudellA2019canonical}, for a single-qutrit Pauli $P$, we set
\[
P_{\pm} = \frac{P^\dagger\pm P}{\sqrt{\Tr[(P^\dagger\pm P )^2]}}
\]
in order to define a basis $\mathcal{B}$ for $\mathbb{H}$.

\begin{definition}
    \label{def:adjointbasis}
    Let $X$ and $Z$ be the single-qutrit Pauli operators and let
    $\mathbb{H}$ be the inner product space of $3\times 3$ traceless Hermitian
    matrices. We define the orthogonal basis $\mathcal{B}$ for
    $\mathbb{H}$ as follows
    \[
    \mathcal{B} = \{ Z_{+}, X_{+}, (XZ)_{+}, (XZ^2)_{+},
    Z_{-}, X_{-}, (XZ)_{-}, (XZ^2)_{-}\}.
    \]
\end{definition}

If $U$ is a Clifford+$T$ operator, then the matrix for $\overline{U}$
in the basis $\mathcal{B}$ (ordered as in
Definition~\ref{def:adjointbasis}) has several useful properties, as
detailed in the following proposition, whose proof can be found in
\cite[Remark~4.15, Remark~4.18, and Proposition~4.20]{GlaudellA2019canonical}.

\begin{proposition}
    \label{prop:adjointstructure}
    Let $U$ be a $3\times 3$ unitary matrix and assume that $U$ can be
    exactly represented by an ancilla-free single-qutrit Clifford+$T$
    circuit. Then, in the basis $\mathcal{B}$, the operator
    $\overline{U}$ has entries in the number ring $\mathbb{A}$. Write
    \[
    \overline{U} =
    \left(
    \begin{array}{c|c}
        A & B \\ \hline
        C & D
    \end{array} \right)
    \]
    where $A$, $B$, $C$, and $D$ are $4\times 4$ matrices. If the minimal $T$-count of $U$ restricted to single-qutrit circuits is $k$, then the LDE of submatrix $A$ is $2k$ and the following statements hold:
    \begin{itemize}
        \item If $k=0$, then $U$ is a Clifford operator.
        \item If $k>0$, then up to generalized row and column permutations over $\mathbb{Z}_3$,
        \[
            \rho_{2k}(A) \sim \begin{bmatrix}
                0 & 0 & 0 & 0 \\
                0 & 2 & 2 & 2 \\
                0 & 2 & 2 & 2 \\
                0 & 2 & 2 & 2
            \end{bmatrix}\qquad\text{and}\qquad\rho_{2k+1}(C) \sim \begin{bmatrix}
            0 & 0 & 0 & 0 \\
            0 & 1 & 1 & 1 \\
            0 & 1 & 1 & 1 \\
            0 & 1 & 1 & 1
        \end{bmatrix}.
        \]
    \end{itemize}
\end{proposition}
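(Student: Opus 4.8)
The plan is to reduce everything to the exact normal form for single-qutrit Clifford+$T$ unitaries from \cite{GlaudellA2019canonical}. Any such $U$ of minimal $T$-count $k$ can be written as a canonical word in which $k$ copies of $T$ are interleaved with Clifford syllables, and this $k$ is a genuine invariant of $U$. Since $U\mapsto\overline{U}$ is a group homomorphism into $SO(8,\mathbb{R})$, it then suffices to understand the adjoint matrices of the generators in the basis $\mathcal{B}$ and to compose them along the normal form, tracking how the denominators and their leading residues evolve.

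First I would compute $\overline{G}$ for each generator in the basis $\mathcal{B}$. Each Clifford maps Paulis to Paulis up to phase, so $\overline{C}$ is a signed permutation (monomial) matrix with entries in $\{0,\pm 1\}$ that either preserves or interchanges the two $4\times 4$ halves of $\mathcal{B}$ determined by the splitting into $P_+$ and $P_-$ vectors; the Hadamard is the source of the $\sqrt{3}$ normalization, which already lies in $\mathbb{D}[\alpha]$ via $\sin(3\theta)=3\sin\theta-4\sin^3\theta$ at $\theta=2\pi/9$. Since $T$ commutes with $Z$, the vectors $Z_+$ and $Z_-$ are fixed by $\overline{T}$ and only the $X$-type basis elements $X_\pm,(XZ)_\pm,(XZ^2)_\pm$ are moved; expanding $TXT^\dagger$ yields entries built from $\sin$ and $\cos$ of multiples of $2\pi/9$. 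Localizing at $\alpha$ — which contains $3$ as an associate of a power of $\alpha$, since $\rho(3)=0$ — is exactly what places all these entries in $\mathbb{A}$, and it also explains why $Z_+$, the first basis vector, decouples from the leading denominator growth.

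The core of the argument is then an induction on $k$ tracking two invariants as the normal form is built up: the least denominator exponent of each block and its top-order residue. One shows that the $A$ block reaches LDE exactly $2k$ while the opposite-parity $C$ block carries LDE $2k+1$, and that the corresponding residues $\rho_{2k}(A)$ and $\rho_{2k+1}(C)$ stabilize to the stated rank-one patterns over $\mathbb{Z}_3$ — the all-$2$'s (i.e.\ $-\mathbb{I}$-like mod $3$) block on the $X$-type coordinates, with the $Z_+$ row and column vanishing. The intervening Clifford syllables, being monomial, act on these residues only by generalized row and column permutations, which is precisely why the conclusion can only be stated up to such permutations. The $k=0$ case is the base of the induction: LDE $0$ forces $\overline{U}$ to be an integral monomial matrix, which by the normal form means $U$ is Clifford.

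I expect the main obstacle to be the lower-bound half of ``LDE equals $2k$'': ruling out accidental cancellation that would drop the denominator at some step. Concretely, one must verify that the leading residue is genuinely nonzero throughout — that composing a fresh $T$-syllable with the current residue pattern, after the intervening Clifford permutation, never annihilates it mod $\alpha$. This no-cancellation step is the usual crux of exact-synthesis lower bounds, and controlling it cleanly is exactly what the residue bookkeeping of \cite{GlaudellA2019canonical} is designed for; by comparison the upper bound LDE $\le 2k$ and the membership of every entry in $\mathbb{A}$ are routine once the generator matrices are in hand.
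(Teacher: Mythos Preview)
The paper does not give a proof of this proposition; it simply cites \cite[Remark~4.15, Remark~4.18, and Proposition~4.20]{GlaudellA2019canonical}. Your proposal is therefore not so much an alternative as a sketch of what that cited argument actually does, and at that level it is accurate: compute the adjoint representations of the generators, use that $U\mapsto\overline{U}$ is a homomorphism, feed in the normal form of \cite{GlaudellA2019canonical}, and run an induction on the $T$-count tracking the LDE and the leading $\mathbb{Z}_3$-residue of the blocks, with the no-cancellation step supplying the lower bound on the LDE. That is exactly the machinery behind the cited remarks and proposition.

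One small inaccuracy worth flagging: your description of the Clifford action in the adjoint basis is slightly too strong. A Clifford $C$ with $CPC^\dagger=\omega^j Q$ sends $P^\dagger+P$ to $\omega^{-j}Q^\dagger+\omega^{j}Q$, which for $j\not\equiv 0\pmod 3$ is a genuine real linear combination of $Q_+$ and $Q_-$, not a signed basis vector. So Cliffords do not in general ``preserve or interchange the two $4\times4$ halves'' as pure monomial matrices; rather, they permute the four Pauli \emph{pairs} $\{P_+,P_-\}$ while acting on each pair by a fixed $2\times2$ rotation determined by the phase $\omega^j$. This does not affect your induction, since these $2\times2$ rotations have entries in $\mathbb{D}[\alpha]$ and act on the residue patterns exactly as the ``generalized row and column permutations over $\mathbb{Z}_3$'' that the statement already quantifies over --- but it is the reason the conclusion is only stated up to such permutations, and it is worth getting right if you intend to carry out the computation in detail.
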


We are now in a position to prove that $R$ cannot be represented over
the Clifford+$T$ gate set without using ancillae.

\begin{proposition}
    \label{prop:RnotCT}
    The $R$ gate cannot be represented by a single-qutrit ancilla-free
    Clifford+$T$ circuit.
\end{proposition}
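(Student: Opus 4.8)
The plan is to study the adjoint representation $\overline{R}$ in the basis $\mathcal{B}$ of Definition~\ref{def:adjointbasis} and contradict the structural constraints of Proposition~\ref{prop:adjointstructure}. I would argue by contradiction: suppose $R$ is exactly represented by an ancilla-free single-qutrit Clifford+$T$ circuit of minimal $T$-count $k$. Since $R$ lies nowhere in the Clifford hierarchy, it is in particular not Clifford, so the $k=0$ case of the proposition is excluded and $k\geq 1$. It then suffices to exhibit a single feature of $\overline{R}$ incompatible with the $k>0$ block structure.

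First I would compute $\overline{R}$ explicitly. Because $R=\mathrm{diag}(1,1,-1)$ is real and diagonal, conjugation fixes $Z$, i.e.\ $RZR=Z$, so $\overline{R}$ fixes both $Z_+$ and $Z_-$. On the remaining ``shift'' Paulis it is a short calculation that conjugation by $R$ maps each $XZ^k$ to a real linear combination of $X$, $XZ$, $XZ^2$, with transition matrix
\[
M=\tfrac13\begin{pmatrix}-1&2&2\\2&-1&2\\2&2&-1\end{pmatrix}
\]
on $\{X,XZ,XZ^2\}$. Since $M$ is real it acts identically on the Hermitian combinations $P_+$ and on $P_-$, giving $\overline{R}((XZ^k)_\pm)=\sum_m M_{km}(XZ^m)_\pm$ with no mixing between the $+$ and $-$ sectors. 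Hence, in the ordering of $\mathcal{B}$, the operator $\overline{R}$ is block diagonal,
\[
\overline{R}=\left(\begin{array}{c|c}A&0\\\hline0&A\end{array}\right),\qquad A=\begin{pmatrix}1&0&0&0\\0&-1/3&2/3&2/3\\0&2/3&-1/3&2/3\\0&2/3&2/3&-1/3\end{pmatrix},
\]
so in particular the off-diagonal block $C$ vanishes identically.

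The contradiction is then immediate. Proposition~\ref{prop:adjointstructure} requires, for $k>0$, that $\rho_{2k+1}(C)$ equal a generalized row/column permutation over $\mathbb{Z}_3$ of the displayed matrix whose lower-right $3\times 3$ corner is all $1$'s, hence a nonzero matrix. But $C\equiv 0$, and a generalized permutation of the zero matrix is still zero, so no value of $k$ can satisfy this. As a consistency check I would also verify that every entry of $\overline{R}$ lies in $\mathbb{A}$ and that $\mathrm{LDE}(A)=6$: using $64\alpha^6=96\alpha^4-36\alpha^2+3$ one gets $\alpha^6/3\in\mathbb{D}[\alpha]$ while $\alpha^k/3\notin\mathbb{D}[\alpha]$ for $k<6$, which under the hypothesis pins $k=3$ and makes $\rho_{6}(A)$ match its required pattern exactly. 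Thus it is genuinely the $C$-block, not the $A$-block, that forces the contradiction.

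The main obstacle is bookkeeping rather than conceptual: one must carefully pass from conjugation of $X$ and $Z$ to the action on the normalized Hermitian basis $\mathcal{B}$ and confirm that the $+/-$ grading is preserved, which hinges on $R$ being real so that $M$ has real entries. Once $C=0$ is established, the appeal to Proposition~\ref{prop:adjointstructure} closes the argument.
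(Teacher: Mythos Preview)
Your proposal is correct and follows essentially the same approach as the paper's proof: compute $\overline{R}$ in the basis $\mathcal{B}$, observe that it is block diagonal with $C=0$, and note that the zero residue $\rho_{2k+1}(C)$ cannot match the nonzero pattern required by Proposition~\ref{prop:adjointstructure}. The paper pins down $k=3$ first from $\mathrm{LDE}(A)=6$ and then checks $\rho_7(C)$, whereas you argue directly that $C=0$ rules out every $k>0$ and do the LDE computation only as a consistency check; this is a cosmetic difference, not a substantive one.
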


\begin{proof}
    Direct computation yields
    \[
    \overline{R} =
    \left(
    \begin{array}{c|c}
        A & B \\ \hline
        C & D
    \end{array} \right)  
    \]
    where
    \[
    A=D = \frac{1}{3}
    \left(
    \begin{array}{cccc}
        3 &  0 &  0 &  0 \\
        0 & -1 &  2 &  2 \\
        0 &  2 & -1 &  2 \\
        0 &  2 &  2 & -1
    \end{array} \right)\qquad\text{and}\qquad B=C=0
    \]
    Thus $\overline{R}$ is a matrix over $\mathbb{A}$. The LDE of $A$ is 6, and thus we compute
    \[
        \rho_6(A) = \begin{bmatrix}
            0 & 0 & 0 & 0 \\
            0 & 2 & 2 & 2 \\
            0 & 2 & 2 & 2 \\
            0 & 2 & 2 & 2
        \end{bmatrix}\qquad \text{and}\qquad\rho_7(C) =  \begin{bmatrix}
        0 & 0 & 0 & 0 \\
        0 & 0 & 0 & 0 \\
        0 & 0 & 0 & 0 \\
        0 & 0 & 0 & 0
    \end{bmatrix}.
    \]
    In particular, $\rho_7(C)$ is \emph{not} equivalent up to generalized row/column permutations to the matrix
    \[
        \begin{bmatrix}
            0 & 0 & 0 & 0 \\
            0 & 1 & 1 & 1 \\
            0 & 1 & 1 & 1 \\
            0 & 1 & 1 & 1
        \end{bmatrix}.
    \]
    Thus, $R$ cannot be represented by a single-qutrit
    ancilla-free Clifford+$T$ circuit.
\end{proof}

\end{document}